\newtheorem{theorem}{Theorem}
\newtheorem{definition}{Definition}
\newtheorem{proposition}{Proposition}
\begin{document}

\title{Lorentzian Iterative Hard Thresholding: Robust Compressed Sensing with Prior Information}

\author{Rafael~E.~Carrillo
        and~Kenneth~E.~Barner

\thanks{R.E. Carrillo was with the Department
of Electrical and Computer Engineering, University of Delaware, Newark, DE 19716 USA. He is now with the Institute of Electrical Engineering, {\'E}cole Polytechnique F{\'e}d{\'e}rale de Lausanne (EPFL), CH-1015 Lausanne, Switzerland. E-mail:~rafael.carrillo@epfl.ch. K.E. Barner is with the Department of Electrical and Computer Engineering, University of Delaware, Newark, DE 19716 USA. E-mail:~barner@eecis.udel.edu.}
}%

\maketitle

\begin{abstract}
Commonly employed reconstruction algorithms in compressed sensing (CS) use the $L_2$ norm as the metric for the residual error. However, it is well-known that least squares (LS) based estimators are highly sensitive to outliers present in the measurement vector leading to a poor performance when the noise no longer follows the Gaussian assumption but, instead, is better characterized by heavier-than-Gaussian tailed distributions. In this paper, we propose a robust iterative hard Thresholding (IHT) algorithm for reconstructing sparse signals in the presence of impulsive noise. To address this problem, we use a Lorentzian cost function instead of the $L_2$ cost function employed by the traditional IHT algorithm. We also modify the algorithm to incorporate prior signal information in the recovery process. Specifically, we study the case of CS with partially known support. The proposed algorithm is a fast method with computational load comparable to the LS based IHT, whilst having the advantage of robustness against heavy-tailed impulsive noise. Sufficient conditions for stability are studied and a reconstruction error bound is derived. We also derive sufficient conditions for stable sparse signal recovery with partially known support. Theoretical analysis shows that including prior support information relaxes the conditions for successful reconstruction. Simulation results demonstrate that the Lorentzian-based IHT algorithm significantly outperform commonly employed sparse reconstruction techniques in impulsive environments, while providing comparable performance in less demanding, light-tailed environments. Numerical results also demonstrate that the partially known support inclusion improves the performance of the proposed algorithm, thereby requiring fewer samples to yield an approximate reconstruction.
\end{abstract}

\begin{keywords}
Compressed sensing, sampling methods, signal reconstruction, nonlinear estimation, impulse noise.
\end{keywords}

\section{Introduction}
\label{sec:intro}
Compressed sensing (CS) demonstrates that a sparse, or compressible, signal can be acquired
using a low rate acquisition process that projects the signal onto a small set of vectors incoherent with the sparsity
basis~\cite{Cand08b}. There are several reconstructions methods that yield perfect or approximate reconstruction proposed in the literature (see \cite{Cand08b,Need08,Blu10} and references therein). To see a review and comparison of the most relevant algorithms see \cite{Need08}. Since noise is always present in practical acquisition systems, a range of different algorithms and methods have been developed that enable approximate reconstruction of sparse signals from noisy compressive measurements~\cite{Cand08b,Need08,Blu10}. Most such algorithms provide bounds for the $L_2$ reconstruction error based on the assumption that the corrupting noise is Gaussian, bounded, or, at a minimum, has finite variance. In contrast to the typical Gaussian assumption, heavy-tailed processes exhibit very large, or infinite, variance. Existing reconstruction algorithms operating on such processes yield estimates far from the desired original signal.

Recent works have begun to address the reconstruction of sparse signals from measurements corrupted by impulsive processes~\cite{Laska09,Carr10,Arce10a,Paredes10a,Studer12}. The works in \cite{Laska09} and \cite{Studer12} assume a sparse error and estimate both signal and error at the same stage using a modified $L_1$ minimization problem. Carrillo \emph{et al.} propose a reconstruction approach based on robust statics theory in~\cite{Carr10}. The proposed non-convex program seeks a solution that minimizes the $L_1$ norm subject to a nonlinear constraint based on the Lorentzian norm. Following this line of thought, this approach is extended in~\cite{Arce10a} to develop an iterative algorithm to solve a Lorentzian $L_0$-regularized cost function using iterative weighted myriad filters. A similar approach is used in~\cite{Paredes10a} by solving an $L_0$-regularized least absolute deviation regression problem yielding an iterative weighted median algorithm. Even though these approaches provide a robust CS framework in heavy-tailed environments, numerical algorithms to solve the proposed optimization problem are slow and complex as the dimension of the problem grows.

Recent results in CS show that modifying the recovery framework to include prior knowledge of the support improves the reconstruction results using fewer measurements~\cite{Vaswani09,Jac09}. Vaswani \emph{et. al} assume that part of the signal support is known \emph{a priori} and the problem is recast as finding the unknown support. The remainder of the signal (unknown support) is a sparser signal than the original, thereby requiring fewer samples to yield an accurate reconstruction~\cite{Vaswani09}. Although the modified CS approach in~\cite{Vaswani09} needs fewer samples to recover a signal, it employs a modified version of basis pursuit (BP)~\cite{Cand08b} to perform the reconstruction. The computational cost of solving the convex problem posed by BP can be high for large scale problems. Therefore, in~\cite{Carr10c} we proposed to extend the ideas of modified CS to iterative approaches like greedy algorithms~\cite{Need08} and iterative reweighted least squares methods~\cite{Chart08}. These algorithms construct an estimate of the signal at each iteration, and are thereby amenable to incorporation of \emph{a priori} support information (1) as an initial condition or (2) at each iteration. Although the aforementioned methods are more efficient than BP, in terms of computational cost, a disadvantage of these methods is the need to invert a linear system at each iteration.

In this paper we propose a Lorentzian based iterative hard thresholding (IHT) algorithm and a simple modification to incorporate prior signal information in the recovery process. Specifically, we study the case of CS with partially known support. The IHT algorithm is a simple iterative method that does not require matrix inversion and provides near-optimal error guarantees~\cite{Blu08,Blu09}. Hard thresholding algorithms have been previously used in image denoising~\cite{Bect04} and sparse representations~\cite{Daub04,Tropp06c}. All of these methods are particular instances of a more general class of iterative thresholding algorithms~\cite{Fig03,Elad07b}. A good general overview of iterative thresholding methods is presented in~\cite{Elad07b}. Related convergence results are also given in~\cite{Comb05}.

The proposed algorithm is a fast method with computational load comparable to the least squares (LS) based IHT, whilst having the advantage of robustness against heavy-tailed impulsive noise. Sufficient conditions for stability are studied and a reconstruction error bound is derived. We also derive sufficient conditions for stable sparse signal recovery with partially known support. Theoretical analysis shows that including prior support information relaxes the conditions for successful reconstruction. Simulations results demonstrate that the Lorentzian based IHT algorithm significantly outperform commonly employed sparse reconstruction techniques in impulsive environments, while providing comparable performance in less demanding, light-tailed environments. Numerical results also demonstrate that the partially known support inclusion improves the performance of the proposed algorithm, thereby requiring fewer samples to yield an approximate reconstruction.

The organization of the rest of the paper is as follows. Section~\ref{sec:BM} gives a brief review of CS and motivates the need of a simple robust algorithm capable of inclusion of prior support knowledge. In Section~\ref{sec:CH6LIHT} a robust iterative algorithm based on the Lorentzian norm is proposed and its properties are analyzed. In Section~\ref{sec:CH6PI} we propose simple modification for the developed algorithm to include prior signal signal information and analyze the partially known support case. Numerical experiments evaluating the performance of the proposed algorithms in different environments are presented in Section~\ref{sec:CH6Res}. Finally, we close in Section~\ref{sec:Ch6conc} with conclusions and future directions.

\section{Background and Motivation}
\label{sec:BM}

\subsection{Lorentzian Based Basis Pursuit}
\label{ssec:LBP}
Let $x\in \mathbb{R}^n$ be an $s$-sparse signal or an $s$-compressible signal. A signal is $s$-sparse if only $s$ of its coefficients are nonzero (usually $s \ll n$). A signal is $s$-compressible if its ordered set of coefficients decays rapidly and $x$ is well approximated by the first $s$ coefficients~\cite{Cand08b}.

Let $\Phi$ be an $m\times n$ sensing matrix, $m<n$, with rows that form a set of vectors incoherent with the sparsity basis \cite{Cand08b}. The signal $x$ is measured by $y=\Phi x+z$, where $z$ is the measurement (sampling) noise. It has been shown that a linear program (Basis Pursuit)  can recover the original signal, $x$, from $y$~\cite{Cand08b}. However, there are several reconstruction methods that yield perfect or approximate reconstructions proposed in the literature (see \cite{Cand08b,Need08,Chart08,Blu10} and references therein). Most CS algorithms use the $L_2$ norm as the metric for the residual error. However, it is well-known that LS based estimators are highly sensitive to outliers present in the measurement vector leading to a poor performance when the noise no longer follows the Gaussian assumption but, instead, is better characterized by heavier-than-Gaussian tailed distributions~\cite{Hube81,Arce05,Carr08,Carr10d}.

In \cite{Carr10} we propose a robust reconstruction approach coined Lorentzian basis pursuit (BP). This method is a robust algorithm capable of reconstructing sparse signals in the presence of impulsive sampling noise. We use the following non-linear optimization problem to estimate $x_0$ from $y$:
\begin{equation}\label{LLP}
\min_{x\in \mathbb{R}^n } \|x\|_{1}~\textrm{subject~to}~\|y-\Phi x\|_{LL_2,\gamma}\leq\epsilon
\end{equation}
where
\begin{equation}
\| u \|_{LL_2,\gamma}=\sum_{i=1}^{m} \log \{1+\gamma^{-2}u_i^2 \},~~u\in\mathbb{R}^m,~\gamma>0,
\end{equation}
is the Lorentzian or $LL_2$ norm. The $LL_2$ norm does not over penalize large deviations, as in the $L_2$ and $L_1$ norms cases, and is therefore a robust metric appropriate for impulsive environments~\cite{Carr10,Carr10d}. The performance analysis of the algorithm is based on the so called restricted isometry properties (RIP) of the matrix $\Phi$~\cite{Cand08b,Cand08}, which are defined in the following.
\begin{definition}
The $s$-restricted isometry constant of $\Phi$, $\delta_s$, is defined as the smallest positive quantity such that
\begin{equation*}
(1-\delta_s)\|v\|_{2}^{2}\leq \|\Phi v\|_{2}^{2}\leq (1+\delta_s)\|v\|_{2}^{2}
\end{equation*}
holds for all $v\in \Omega_s$, where $\Omega_s=\{ v\in \mathbb{R}^n | \|v\|_0\leq s\}$. A matrix $\Phi$ is said to satisfy the RIP of order $s$ if $\delta_s\in (0,1)$.
\end{definition}

Carrillo \emph{et. al} show in~\cite{Carr10} that if $\Phi$ meets the RIP of order $2s$, with $\delta_{2s}<\sqrt{2}-1$, then, for any $s$-sparse signal $x_0$ and observation noise $z$ with $\|z\|_{LL_2,\gamma}\leq \epsilon$, the solution to \eqref{LLP}, denoted as $x^*$, obeys
\begin{equation}\label{thm1bound}
\|x^*-x_0\|_2 \leq C_s \cdot2\gamma \cdot \sqrt{m(e^{\epsilon}-1)},
\end{equation}
where $C_s$ is a small constant. One remark is that $\gamma$ controls the robustness of the employed norm and $\epsilon$ the radius of the feasibility set $LL_2$ ball.

Although Lorentzian BP outperforms state of the art CS recovery algorithms in impulsive environments and achieves comparable performance in less demanding light-tailed environments, numerical algorithms to solve the optimization problem posed by Lorentzian BP are extremely slow and complex~\cite{Carr10}. Therefore, faster and simpler methods are sought to solve the sparse recovery problem in the presence of impulsive sampling noise.

\subsection{Iterative hard thresholding}
\label{ssec:IHT}
The iterative hard thresholding (IHT) algorithm is a simple iterative method that does not require matrix inversion at any point and provides near-optimal error guarantees~\cite{Blu09,Blu10}. The algorithm is described as follows.

Let $x^{(t)}$ denote the solution at iteration time $t$ and set $x^{(0)}$ to the zero vector. At each iteration $t$ the algorithm computes
\begin{equation}
x^{(t+1)}=H_s\left ( x^{(t)}+\mu\Phi^T(y-\Phi x^{(t)})  \right ),
\end{equation}
where $H_s(a)$ is the non-linear operator that sets all but the largest (in magnitude) $s$ elements of $a$ to zero and $\mu$ is a step size. If there is no unique set, a set can be selected either randomly or based on a predefined ordering. Convergence of this algorithm is proven in~\cite{Blu08} under the condition that $\|\Phi\|_{2\rightarrow 2}<1$, where $\|\Phi\|_{2\rightarrow 2}$ represents the spectral norm of $\Phi$, and a theoretical analysis for compressed sensing problems is presented in~\cite{Blu09,Blu10}.
Blumensath and Davies show in~\cite{Blu09} that if $\|z\|_2\leq \epsilon$ ($L_2$ bounded noise) and $\delta_{3s}<1/\sqrt{32}$, the reconstruction error of the IHT algorithm at iteration $t$ is bounded by
\begin{equation}
\|x-x^{(t)}\|_2\leq \alpha^t \|x\|_2 + \beta \epsilon,
\end{equation}
where $\alpha<1$ and $\beta$ are absolute constants that depend only on $\delta_{2s}$ and $\delta_{3s}$.

\subsection{Compressed sensing with partially known support}
\label{ssec:CSPKS}
Recent works show that modifying the CS framework to include prior knowledge of the support improves the reconstruction results using fewer measurements~\cite{Vaswani09,Jac09}. Let $x\in \mathbb{R}^n$ be an sparse or compressible signal in some basis $\Psi$ and denote $T=\text{supp}(x)$. In this setting, we assume that $T$ is partially known, \emph{i.e.} $T=T_0\cup\Delta$. The set $T_0\subset\{1,\ldots,n\}$ is the \emph{a priori} knowledge of the support of $x$ and $\Delta\subset\{1,\ldots,n\}$ is the unknown part of the support. This scenario is typical in many real signal processing applications, \emph{e.g.}, the lowest subband coefficients in a wavelet decomposition, which represent a low frequency approximation of the signal, or the first coefficients of a DCT transform of an image with a constant background, are known to be significant components.

The \emph{a priori} information modified CS seeks out a signal that explains the measurements and whose support contains the smallest number of new additions to $T_0$. Vaswani \emph{et al.} modify BP in \cite{Vaswani09} to find an sparse signal assuming uncorrupted measurements. This technique is extended by Jacques in \cite{Jac09} to the case of corrupted measurements and compressible signals. Jacques finds sufficient conditions in terms of RIP for stable reconstruction in this general case. The approach solves the following optimization program
\begin{equation}\label{iBPD}
\min_{x\in \mathbb{R}^n} \|x_{T_0^c}\|_{1}~~\textrm{s.~t.}~~\|y-\Phi x\|_{2}\leq\epsilon,
\end{equation}
where $x_{\Omega}$ denotes the vector $x$ with everything except the components indexed in $\Omega\subset\{1,\ldots,n\}$ set to 0.

Although the modified CS approach needs fewer samples to recover a signal, the computational cost of solving \eqref{iBPD} can be high, or complicated to implement. Therefore, we extend the ideas of modified CS to iterative approaches, such as greedy algorithms~\cite{Trop07,Need08} and iterative reweighted least squares methods~\cite{Carr09a}, in~\cite{Carr10c}. Even though the aforementioned methods are more efficient than BP, in terms of computational cost, a disadvantage is that these methods need to invert a linear system at each iteration. In the following section we develop a robust algorithm, inspired by the IHT algorithm, capable of diminishing the effect of impulsive noise and also capable of including partial support information.

\section{Lorentzian based Iterative Hard Thresholding Algorithm}
\label{sec:CH6LIHT}
In this section we propose a Lorentzian derived IHT algorithm for the recovery of sparse signals when the measurements are (possibly) corrupted by impulsive noise. First, we present the algorithm formulation and derive theoretical guarantees. Then, we describe how to optimize the algorithm parameters for enhanced performance.
\subsection{Algorithm formulation and stability guarantees}
Let $x_0\in \mathbb{R}^n$ be an $s$-sparse or $s$-compressible signal, $s < n$. Consider the sampling model
\begin{equation*}
y=\Phi x_0 +z,
\end{equation*}
where $\Phi$ is an $m\times n$ sensing matrix and $z$ denotes the sampling noise vector.
In order to estimate $x_0$ from $y$ we pose the following optimization problem:
\begin{equation}\label{Ch6P0}
\min_{x\in \mathbb{R}^n} \|y-\Phi x\|_{LL_2,\gamma}~~\text{subject to}~~\|x\|_0\leq s.
\end{equation}
However, the problem in \eqref{Ch6P0} is non-convex and combinatorial. Therefore, we derive a suboptimal strategy to estimate $x_0$ based on the gradient projection algorithm~\cite{Bertsekas99} since the Lorentzian norm is an everywhere continuous and differentiable function \cite{Carr10d}. The proposed strategy is formulated as follows. Let $x^{(t)}$ denote the solution at iteration time $t$ and set $x^{(0)}$ to the zero vector. At each iteration $t$ the algorithm computes
\begin{equation}\label{Ch6update}
x^{(t+1)}=H_s\left ( x^{(t)}+\mu g^{(t)}  \right )
\end{equation}
where $H_s(a)$ is the non-linear operator that sets all but the largest (in magnitude) $s$ elements of $a$ to zero, $\mu$ is a step size and
\begin{equation*}
g=-\nabla_{x}\|y-\Phi x\|_{LL_2,\gamma}.
\end{equation*}
If there is no unique set, a set can be selected either randomly or based on a predefined ordering. The negative gradient, $g$, can be expressed in the following form. Denote $\phi_i$ as the $i$-th row vector of $\Phi$. Then
\begin{equation}
g^{(t)}=\Phi^{T}W_{t}(y-\Phi x^{(t)})
\end{equation}
where $W_{t}$ is an $m\times m$ diagonal matrix with each element on the diagonal defined as
\begin{equation}\label{Ch6weight}
[W_{t}]_{i,i}=\frac{\gamma^2}{\gamma^2+(y_i-\phi^{T}_i x^{(t)})^2},~~i=1,\ldots,m.
\end{equation}
We coined the algorithm defined by the update in \eqref{Ch6update} Lorentzian iterative hard thresholding (LIHT). The derived algorithm is almost identical to LS based IHT in terms of computational load except for the additional cost of computing the $m$ weights in \eqref{Ch6weight} and a multiplication by an $m\times m$ diagonal matrix, with the advantage of robustness against heavy-tailed impulsive noise. Therefore the computational complexity per iteration of LIHT remains $\mathcal{O}(mn)$, which is limited by the application of the sensing operator $\Phi$ and its adjoint $\Phi^T$. If fast sensing operators are available then the computational complexity is reduced. Note that $[W_{t}]_{i,i}\leq 1$, with the weights going to zero when large deviations, compared to $\gamma$, are detected. In fact, if $W_t=I$ the algorithm reduces to the LS based IHT. Thus, the algorithm can be seen as a reweighted least squares thresholding approach, on which the weights diminish the effect of gross errors assigning a small weight for large deviations and a weight near one for deviations close to zero. Figure~\ref{Ch6fig:0b} shows an example of the obtained weight function with $\gamma=1$.

\begin{figure}[t]
\centering{ 
\includegraphics[width = 0.7\columnwidth]{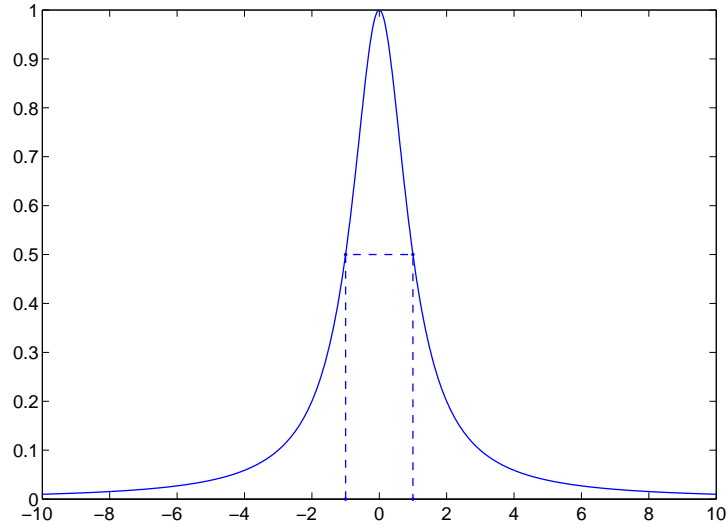}}
\caption{Weight function for $\gamma=1$. Large deviations have a weight close to zero whilst small deviations have a weight close to one.} \label{Ch6fig:0b}
\end{figure}

In the following, we show that LIHT has theoretical stability guarantees similar to those of IHT. For simplicity of the analysis we set $\mu=1$, as in \cite{Blu09}.
\begin{theorem}\label{LIHTthm1}
Let $x_0\in \mathbb{R}^n$. Define $S=\text{supp}(x_0)$, $|S|\leq s$. Suppose $\Phi \in \mathbb{R}^{m\times n}$ meets the RIP of order $3s$ and $\|\Phi\|_{2\rightarrow 2}\leq 1$. Assume $x^{(0)}=0$. Then if $\|z\|_{LL_2,\gamma}\leq \epsilon$ and $\delta_{3s}<1/\sqrt{32}$, the reconstruction error of the LIHT algorithm at iteration $t$ is bounded by
\begin{equation}
\|x_0-x^{(t)}\|_2\leq \alpha^t \|x_0\|_2 + \beta \gamma\sqrt{m(e^{\epsilon}-1)},
\end{equation}
where $\alpha=\sqrt{8}\delta_{3s}$ and $\beta=\sqrt{1+\delta_{2s}}(1-\alpha^t)(1-\alpha)^{-1}$.
\end{theorem}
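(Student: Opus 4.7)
The plan is to emulate the one-step contraction analysis of Blumensath and Davies for the LS-based IHT, with $\Phi^T W_t \Phi$ and $\Phi^T W_t z$ replacing $\Phi^T\Phi$ and $\Phi^T z$. A preparatory step converts the Lorentzian noise bound into an $L_2$ bound: since every summand in $\|z\|_{LL_2,\gamma}=\sum_i\log(1+z_i^2/\gamma^2)$ is nonnegative, the hypothesis $\|z\|_{LL_2,\gamma}\leq\epsilon$ forces $\log(1+z_i^2/\gamma^2)\leq\epsilon$ for each $i$, hence $z_i^2\leq\gamma^2(e^\epsilon-1)$ and $\|z\|_2\leq\gamma\sqrt{m(e^\epsilon-1)}$. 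This is the precise noise level appearing in the theorem.

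Let $e^{(t)}=x_0-x^{(t)}$, $r^{(t)}=\Phi e^{(t)}+z$, $a^{(t+1)}=x^{(t)}+\Phi^T W_t r^{(t)}$, and $B^{(t+1)}=\mathrm{supp}(x^{(t+1)})\cup S$, with $|B^{(t+1)}|\leq 2s$. Since $x^{(t+1)}=H_s(a^{(t+1)})$ is the best $s$-term approximation to $a^{(t+1)}$ while $x_0$ is $s$-sparse, the standard support-restricted triangle-inequality argument yields $\|x_0-x^{(t+1)}\|_2\leq 2\,\|(a^{(t+1)}-x_0)|_{B^{(t+1)}}\|_2$. Expanding $a^{(t+1)}-x_0=-(I-\Phi^T W_t\Phi)\,e^{(t)}+\Phi^T W_t z$ splits this into a contraction part and a noise part. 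The noise part is handled by $\|(\Phi^T W_t z)|_{B^{(t+1)}}\|_2 \leq \sqrt{1+\delta_{2s}}\,\|W_t z\|_2\leq\sqrt{1+\delta_{2s}}\,\|z\|_2$, using the RIP-implied operator-norm bound on $\Phi_{B^{(t+1)}}$ together with $W_t\preceq I$.

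The central obstacle is the contraction part: showing $\|((I-\Phi^T W_t\Phi)\,e^{(t)})|_{B^{(t+1)}}\|_2\leq\sqrt{2}\,\delta_{3s}\,\|e^{(t)}\|_2$, so that combined with the factor of $2$ above one obtains the claimed $\alpha=\sqrt{8}\,\delta_{3s}$. Since $e^{(t)}$ lives on $\Gamma_t=S\cup\mathrm{supp}(x^{(t)})$, the union $\Gamma_t\cup B^{(t+1)}$ has cardinality at most $3s$, reducing the task to bounding the operator norm of a $3s$-dimensional principal submatrix of $I-\Phi^T W_t\Phi$. The argument leans on two structural facts: (i) the PSD sandwich $0\preceq I-\Phi^T W_t\Phi\preceq I$, which follows from $W_t\preceq I$ and the hypothesis $\|\Phi\|_{2\to 2}\leq 1$, and (ii) the decomposition $I-\Phi^T W_t\Phi=(I-\Phi^T\Phi)+\Phi^T(I-W_t)\Phi$, whose first summand has $3s$-sparse operator norm at most $\delta_{3s}$ by the RIP. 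This is the delicate step where the diagonal structure of $W_t$ with entries in $(0,1]$ must combine with the RIP to deliver a bound proportional to $\delta_{3s}$ rather than the trivial bound $\|e^{(t)}\|_2$ that the PSD sandwich alone would yield.

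Assembling the pieces produces the per-iteration recursion
\[
\|e^{(t+1)}\|_2\leq\sqrt{8}\,\delta_{3s}\,\|e^{(t)}\|_2+\sqrt{1+\delta_{2s}}\,\gamma\sqrt{m(e^\epsilon-1)}.
\]
Under the hypothesis $\delta_{3s}<1/\sqrt{32}$ the contraction factor satisfies $\alpha=\sqrt{8}\,\delta_{3s}<1/2$, so the recursion is strictly contractive. Iterating from $x^{(0)}=0$ (so $\|e^{(0)}\|_2=\|x_0\|_2$) and summing the geometric series $\sum_{k=0}^{t-1}\alpha^k=(1-\alpha^t)/(1-\alpha)$ delivers the bound stated in the theorem.
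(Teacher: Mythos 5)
Your architecture matches the paper's: the paper proves only Theorem~\ref{Ch6thm2} (the partially-known-support version) and obtains Theorem~\ref{LIHTthm1} as the case $k=0$, and that proof is essentially your outline --- the factor-of-$2$ best-$s$-term-approximation step on $B^{(t+1)}$, the expansion of $a^{(t)}-x_0$ into a contraction term and a noise term, the bound $\|W_t z\|_2\le\|z\|_2\le\gamma\sqrt{m(e^{\epsilon}-1)}$, and the geometric series. Your direct derivation of the $L_2$ noise bound from the Lorentzian constraint is exactly the content of the Lemma~1 the paper cites, and your noise-term estimate via $\|\Phi_{B^{(t+1)}}\|_{2\to2}\le\sqrt{1+\delta_{2s}}$ and $W_t\preceq I$ is the paper's.

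The gap is the step you yourself flag as the central obstacle: you assert $\|((I-\Phi^T W_t\Phi)\,e^{(t)})|_{B^{(t+1)}}\|_2\le\sqrt{2}\,\delta_{3s}\,\|e^{(t)}\|_2$, but neither of the two facts you cite delivers it. The sandwich $0\preceq I-\Phi^T W_t\Phi\preceq I$ gives only the trivial bound, as you note. In the decomposition $I-\Phi^T W_t\Phi=(I-\Phi^T\Phi)+\Phi^T(I-W_t)\Phi$, the second summand restricted to a $3s$-support has operator norm as large as $(1+\delta_{3s})\max_i(1-[W_t]_{ii})$, which is not $O(\delta_{3s})$: when some residual $y_i-\phi_i^T x^{(t)}$ is large relative to $\gamma$ --- precisely the impulsive regime the algorithm targets --- the corresponding weight is near zero and this term has norm near $1+\delta_{3s}$. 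So the contraction with constant $\sqrt{8}\,\delta_{3s}$ is not established by what you wrote. For the record, the paper closes this step by first splitting $r^{(t)}$ over $B^{(t+1)}$ and $B^{(t)}\setminus B^{(t+1)}$ and then asserting that, because $[W_t]_{ii}\le 1$, the eigenvalues of $\Phi^T W_t\Phi$ are dominated by those of $\Phi^T\Phi$, so the weighted restricted operator norms may be replaced by the unweighted ones ($\delta_{2s}$ and $\delta_{3s}$). Note that this one-sided domination controls the lower edge of the spectrum of $I-\Phi_{B}^T W_t\Phi_{B}$ but not its upper edge $1-\lambda_{\min}(\Phi_{B}^T W_t\Phi_{B})$, so your instinct that this is the delicate point is well founded; but identifying the difficulty is not the same as resolving it, and a complete argument must either supply the missing spectral estimate or add a hypothesis (e.g., a lower bound on the weights) under which it holds. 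A minor secondary point: the factor of $2$ from the triangle-inequality step should also multiply the noise term, giving $2\sqrt{1+\delta_{2s}}\,\|W_t z\|_2$ in the recursion rather than $\sqrt{1+\delta_{2s}}\,\|W_t z\|_2$.
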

Proof of Theorem \ref{LIHTthm1} follows from the fact that $W_{t}(i,i)\leq 1$, which implies that
\begin{equation*}
\|W_t z \|_2\leq \|z\|_2\leq \gamma\sqrt{m(e^{\epsilon}-1)},
\end{equation*}
where the second inequality follows from Lemma 1 in \cite{Carr10}. Argument details parallel those of the proof of Theorem~\ref{Ch6thm2} in the next section and, in fact, Theorem~\ref{LIHTthm1} is a particular case of Theorem~\ref{Ch6thm2}. Therefore we provide only a proof for the later.

Although the algorithm is not guaranteed to converge to a global minima of \eqref{Ch6P0}, it can be shown that LIHT converges to a local minima since $[W_{t}]_{i,i}\leq 1$. Thus the eigenvalues of $\Phi^T W_{t}\Phi$ are bounded above by the eigenvalues of $\Phi^T\Phi$ and the sufficient condition $\|\Phi\|_{2\rightarrow 2}\leq 1$ guarantees local convergence~\cite{Blu09}. Notice that the RIP sufficient condition for stable recovery is identical to the one required by the LS based IHT algorithm~\cite{Blu09}.

The results in Theorem~\ref{LIHTthm1} can be easily extended to compressible signals using Lemma 6.1 in~\cite{Need08}. Suppose $x_0\in \mathbb{R}^n$ is a $s$-compressible signal. Suppose $\Phi \in \mathbb{R}^{m\times n}$ meets the RIP of order $3s$ and $\|\Phi\|_{2\rightarrow 2}\leq 1$. Assume $x^{(0)}=0$. Then, if the conditions of Theorem~\ref{LIHTthm1} are met, the reconstruction error of the LIHT algorithm at iteration $t$ is bounded by
\begin{equation}
\|x_0-x^{(t)}\|_2\leq \eta \left (\|x_0-x_s\|_2+\frac{\|x_0-x_s\|_1}{\sqrt{s}} \right )+\alpha^t \|x_0\|_2 + \beta \gamma\sqrt{m(e^{\epsilon}-1)},
\end{equation}
where $\alpha=\sqrt{8}\delta_{3s}$, $\beta=\sqrt{1+\delta_{2s}}(1-\alpha^t)(1-\alpha)^{-1}$, $\eta=\sqrt{1+\delta_s}$ and $x_s$ is the best $s$-term approximation of $x_0$.

\subsection{Parameter tuning}
\label{ssec:CH6PT}
The performance of the LIHT algorithm depends on the scale parameter $\gamma$ of the Lorentzian norm and the step size, $\mu$. Therefore, we detail methods to estimate these two parameters in the following.

It is observed in \cite{Carr10} that setting $\gamma$ to half the sample range of $y$, $(y_{(1)}-y_{(0)})/2$ (where $y_{(q)}$ denotes the $q$-th quantile of $y$), often makes the Lorentzian norm a fair approximation to the $L_2$ norm. Therefore, the optimal value of $\gamma$ should be $(y'_{(1)}-y'_{(0)})/2$, where $y'=\Phi x_0$ is the uncorrupted measurement vector. Since the uncorrupted measurements are unknown, we propose to estimate the scale parameter as
\begin{equation}\label{Ch6gamma}
\gamma=\frac{y_{(0.875)}-y_{(0.125)}}{2}.
\end{equation}
This value of $\gamma$ considers implicitly a measurement vector with 25$\%$ of the samples corrupted by outliers and 75$\%$ well behaved. Experimental results show that this estimate leads to good performance in both Gaussian and impulsive environments (see Section \ref{sec:CH6Res} below).

As described in \cite{Blu10}, the convergence and performance of the LS based IHT algorithm improve if an adaptive step size, $\mu^{(t)}$, is used to normalize the gradient update. We use a similar approach in our algorithm. Let $S^{(t)}$ be the support of $x^{(t)}$ and suppose that the algorithm has identified the true support of $x_0$, \emph{i.e.} $S^{(t)}=S^{(t+1)}=S$. In this case we want to minimize $\|y-\Phi_{S}x_{S}\|_{LL_2,\gamma}$ using a gradient descent algorithm with updates of the form
\begin{equation}
x^{(t+1)}_{S}=x^{(t)}_{S}+\mu^{(t)} g^{(t)}_{S}.
\end{equation}
Finding the optimal $\mu$, \emph{i.e.}, a step size that maximally reduces the objective at each iteration, is not an easy task and in fact there is no known closed form for such an optimal step. To overcome this limitation, we propose to use the following suboptimal approach. We update the step size at each iteration as
\begin{align}
\mu^{(t)}&=\min_{\mu}\|W_t^{1/2}[y-\Phi_{S}(x^{(t)}_{S}+\mu g^{(t)}_{S})] \|_2^2\\ \nonumber
&=\frac{\|g^{(t)}_{S}\|_2^2}{\|W_t^{1/2}\Phi_{S}g^{(t)}_{S} \|_2^2},
\end{align}
which guarantees that the objective Lorentzian function is not increased at each iteration.
\begin{proposition}\label{Ch6stepupdate}
Let $\mu^{(t)}=\|g^{(t)}_{S}\|_2^2/\|W_t^{1/2}\Phi_{S}g^{(t)}_{S} \|_2^2$ and $x^{(t+1)}_{S}=x^{(t)}_{S}+\mu^{(t)} g^{(t)}_{S}$. Then, if $S^{(t)}=S^{(t+1)}=S$, the update guarantees that
\begin{equation*}
\|y-\Phi x^{(t+1)}\|_{LL_2,\gamma}\leq \|y-\Phi x^{(t)}\|_{LL_2,\gamma}.
\end{equation*}
\end{proposition}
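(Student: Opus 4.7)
The plan is to establish a majorization--minimization style descent property. The key observation is that for each $t$, concavity of $h(s) = \log(1 + s/\gamma^2)$ on $s \geq 0$ yields the tangent-line upper bound
\begin{equation*}
\log(1 + u^2/\gamma^2) \leq \log(1 + v^2/\gamma^2) + \frac{u^2 - v^2}{\gamma^2 + v^2}
\end{equation*}
for all $u, v \in \mathbb{R}$. Summing this inequality coordinatewise with $v_i = y_i - \phi_i^T x^{(t)}$ and $u_i = y_i - \phi_i^T x^{(t+1)}$, and recognizing from \eqref{Ch6weight} that $[W_t]_{i,i}/\gamma^2 = 1/(\gamma^2 + v_i^2)$, I obtain the quadratic surrogate
\begin{equation*}
\|y - \Phi x^{(t+1)}\|_{LL_2,\gamma} \leq \|y - \Phi x^{(t)}\|_{LL_2,\gamma} + \frac{1}{\gamma^2}\left(\|W_t^{1/2}(y - \Phi x^{(t+1)})\|_2^2 - \|W_t^{1/2}(y - \Phi x^{(t)})\|_2^2\right).
\end{equation*}
Thus it suffices to show that the proposed update does not increase the weighted squared residual $\|W_t^{1/2}(y - \Phi x)\|_2^2$.

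Next, I would invoke the hypothesis $S^{(t)} = S^{(t+1)} = S$ so that $\Phi x^{(\cdot)} = \Phi_S x_S^{(\cdot)}$ at both iterates, and expand the weighted residual along the line $x_S = x_S^{(t)} + \mu g_S^{(t)}$ as the scalar quadratic
\begin{equation*}
q(\mu) = \|W_t^{1/2}(y - \Phi x^{(t)})\|_2^2 - 2\mu\, (g_S^{(t)})^T \Phi_S^T W_t (y - \Phi x^{(t)}) + \mu^2\, \|W_t^{1/2} \Phi_S g_S^{(t)}\|_2^2.
\end{equation*}
Since $g_S^{(t)} = \Phi_S^T W_t(y - \Phi x^{(t)})$ by construction of the gradient, the linear coefficient collapses to $-2\mu\|g_S^{(t)}\|_2^2$, and setting $q'(\mu) = 0$ recovers exactly the prescribed $\mu^{(t)} = \|g_S^{(t)}\|_2^2 / \|W_t^{1/2}\Phi_S g_S^{(t)}\|_2^2$. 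Because $q$ is a convex parabola in $\mu$ with nonnegative leading coefficient, this stationary point is its global minimizer, whence $q(\mu^{(t)}) \leq q(0)$; that is, $\|W_t^{1/2}(y - \Phi x^{(t+1)})\|_2^2 \leq \|W_t^{1/2}(y - \Phi x^{(t)})\|_2^2$.

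Chaining the surrogate bound with this weighted-residual inequality yields the claimed monotonicity of $\|y - \Phi x\|_{LL_2,\gamma}$. The main obstacle is not analytical depth but careful bookkeeping: one must identify $W_t$ as precisely the diagonal weighting that emerges from the tangent-line bound of the concave function $h$, and then verify that the step $\mu^{(t)}$ coincides with the one-dimensional line-search minimizer of the induced weighted quadratic along direction $g_S^{(t)}$. The degenerate case $\|W_t^{1/2}\Phi_S g_S^{(t)}\|_2 = 0$ warrants a short remark, but there $q(\mu)$ is constant and the inequality holds trivially.
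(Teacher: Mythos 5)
Your proof is correct and follows essentially the same route as the paper's: the tangent-line (square-concavity) majorization of $\log(1+u^2/\gamma^2)$ producing the weighted quadratic surrogate with weights $W_t$, followed by the observation that $\mu^{(t)}$ is the exact minimizer of that quadratic along $g_S^{(t)}$. Your explicit expansion of $q(\mu)$ and the remark on the degenerate case $\|W_t^{1/2}\Phi_S g_S^{(t)}\|_2=0$ are welcome additions the paper leaves implicit, but the argument is the same.
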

Before proving Proposition~\ref{Ch6stepupdate}, we need a known result for square concave functions that will be used in the proof.
\begin{proposition}\label{Ch6scf}
Let $f(a)=g(a^2)$ with $g$ concave. Then for any $a,b\in \mathbb{R}$ we have the following inequality:
\begin{equation*}
f(a)-f(b)\leq \frac{f'(b)}{2b}(a^2-b^2)
\end{equation*}
which is the differential criterion for the concavity of $g$.
\end{proposition}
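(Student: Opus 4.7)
The plan is to derive the inequality directly from the standard first-order (differential) characterization of concavity applied to $g$, then translate the result back into statements about $f$ via the chain rule.

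First, I would invoke the well-known fact that a differentiable function $g$ defined on an interval is concave if and only if, for all admissible points $u,v$ in its domain,
\begin{equation*}
g(u)-g(v)\leq g'(v)(u-v).
\end{equation*}
Specializing this inequality to the nonnegative arguments $u=a^2$ and $v=b^2$ yields
\begin{equation*}
g(a^2)-g(b^2)\leq g'(b^2)\bigl(a^2-b^2\bigr).
\end{equation*}

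Next I would translate this statement about $g$ into the claimed statement about $f$. By definition, $f(a)-f(b)=g(a^2)-g(b^2)$, so the left-hand side is already in the desired form. For the right-hand side, I would compute $f'(a)=2a\,g'(a^2)$ via the chain rule, hence $g'(b^2)=f'(b)/(2b)$ whenever $b\neq 0$. Substituting this identity into the inequality above produces
\begin{equation*}
f(a)-f(b)\leq \frac{f'(b)}{2b}\bigl(a^2-b^2\bigr),
\end{equation*}
which is exactly the asserted bound.

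The only subtlety, and the one step I would flag as requiring a brief comment, is the case $b=0$: the quotient $f'(b)/(2b)$ is a priori indeterminate, but it should be interpreted as the limit $\lim_{b\to 0} f'(b)/(2b)=g'(0)$, which is well-defined under the stated differentiability of $g$. Once this convention is adopted the inequality reduces to $g(a^2)-g(0)\leq g'(0)\,a^2$, which is again the concavity of $g$ applied at $u=a^2$, $v=0$. Beyond this minor point, the argument is essentially a one-line reduction to the first-order concavity criterion, so I anticipate no significant obstacle.
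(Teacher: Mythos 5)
Your proof is correct and matches the paper's intent exactly: the paper states this proposition without proof, describing it as nothing more than the first-order (differential) criterion for the concavity of $g$ evaluated at $u=a^2$, $v=b^2$, combined with the chain-rule identity $f'(b)=2b\,g'(b^2)$ — precisely the argument you give. Your remark on interpreting $f'(b)/(2b)$ as $g'(0)$ when $b=0$ is a reasonable extra precaution that the paper does not bother to make.
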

Now we can prove Proposition~\ref{Ch6stepupdate}.
\begin{proof}
Define
\begin{equation*}
f(a)=\log \left ( 1+\frac{a^2}{\gamma^2} \right )~~\text{and}~~r^{(t)}=y-\Phi x^{(t)}.
\end{equation*}
Using Proposition~\ref{Ch6scf} and the fact that $f(x)$ is square concave, we have the following inequality:
\begin{align*}
\sum_{i=1}^m f([r^{(t+1)}]_i)-f([r^{(t)}]_i)&\leq \frac{1}{2}\sum_{i=1}^m \frac{f'([r^{(t)}]_i)}{[r^{(t)}]_i}([r^{(t+1)}]_i^2-[r^{(t)}]_i^2)\\
&=\frac{1}{2\gamma^2}\sum_{i=1}^m[W_t]_{ii}[r^{(t+1)}]_i^2
+\frac{1}{2\gamma^2}\sum_{i=1}^m[W_t]_{ii}[r^{(t)}]_i^2.
\end{align*}
This is equivalent to
\begin{align*}
\|y-\Phi x^{(t+1)}\|_{LL_2,\gamma} &-\|y-\Phi x^{(t)}\|_{LL_2,\gamma}\\
&\leq \frac{1}{2\gamma^2}\|W_t^{1/2}(y-\Phi x^{(t+1)})\|_2^2 -\frac{1}{2\gamma^2}\|W_t^{1/2}(y-\Phi x^{(t)})\|_2^2.
\end{align*}
From the optimality of $\mu^{(t)}$ we have
\begin{equation*}
\|W_t^{1/2}(y-\Phi x^{(t+1)})\|_2^2 -\|W_t^{1/2}(y-\Phi x^{(t)})\|_2^2\leq 0.
\end{equation*}
Therefore
\begin{equation*}
\|y-\Phi x^{(t+1)}\|_{LL_2,\gamma} -\|y-\Phi x^{(t)}\|_{LL_2,\gamma}\leq 0
\end{equation*}
which is the desired result.
\end{proof}

Notably, if the support of $x^{(t+1)}$ differs from the support of $x^{(t)}$, the optimality of $\mu^{(t)}$ is no longer guaranteed. Thus, if 
\begin{equation*}
\|y-\Phi x^{(t+1)}\|_{LL_2,\gamma}> \|y-\Phi x^{(t)}\|_{LL_2,\gamma},
\end{equation*}
we use a backtracking line search strategy and reduce $\mu^{(t)}$ geometrically, \emph{i.e.}  $\mu^{(t)}\leftarrow \mu^{(t)}/2$, until the objective function in \eqref{Ch6P0} is reduced.

\section{Lorentzian Iterative Hard Thresholding with Prior Information}
\label{sec:CH6PI}
In this section we modify the LIHT algorithm to incorporate prior signal information into the recovery process. The LIHT algorithm constructs an estimate of the signal at each iteration, thereby incorporating prior knowledge at each step of the recursion. In the following we propose extensions of the LIHT algorithm to incorporate partial support knowledge. We describe then a general modification to include the model-based CS framework of~\cite{Barak10a}.

\subsection{Lorentzian iterative hard thresholding with partially known support}
\label{ssec:CH6PKS}
Let $x_0\in \mathbb{R}^n$ be an $s$-sparse or $s$-compressible signal, $s < n$. Consider the sampling model $y=\Phi x_0 +z$, where $\Phi$ is an $m\times n$ sensing matrix and $z$ denotes the sampling noise vector. Denote $T=\text{supp}(x_0)$ and assume that $T$ is partially known, \emph{i.e.} $T=T_0\cup\Delta$. Define $k=|T_0|$. We propose a simple extension of the LIHT algorithm that incorporates the partial support knowledge into the recovery process. The modification of the algorithm is described in the following.

Denote $x^{(t)}$ as the solution at iteration $t$ and set $x^{(0)}$ to the zero vector. At each iteration $t$ the algorithm computes
\begin{equation}
x^{(t+1)}=H_{s-k}^{T_0}\left ( x^{(t)}+^{(t)}\Phi^T W_t(y-\Phi x^{(t)})  \right ),
\end{equation}
where the nonlinear operator $H_u^{\Omega}(\cdot)$ is defined as
\begin{equation}
H_u^{\Omega}(a)=a_{\Omega}+H_u(a_{\Omega^{c}}),~\Omega\subset\{1,\ldots,n\}.
\end{equation}

The algorithm selects the $s-k$ largest (in magnitude) components that are not in $T_0$ and preserves all components in $T_0$ at each iteration. We coin this algorithm Lorentzian iterative hard thresholding with partially known support (LIHT-PKS).

The main result of this section, Theorem~\ref{Ch6thm2} below, shows the stability of LIHT-PKS and establish sufficient conditions for stable recovery in terms of the RIP of $\Phi$.
In the following we show that LIHT-PKS has theoretical stability guarantees similar to those of IHT~\cite{Blu09}. For simplicity of the analysis, we set $\mu=1$ as in section~\ref{sec:CH6LIHT}.

\begin{theorem}\label{Ch6thm2}
Let $x\in \mathbb{R}^n$. Define $T=\text{supp}(x)$ with $|T|=s$. Also define $T=T_0\cup \Delta$ and $|T_0|=k$. Suppose $\Phi \in \mathbb{R}^{m\times n}$ meets the RIP of order $3s-2k$ and $\|\Phi\|_{2\rightarrow 2}\leq 1$. Then if $\|z\|_{LL_2,\gamma}\leq \epsilon$ and $\delta_{3s-2k}<1/\sqrt{32}$, the reconstruction error of the IHT-PKS algorithm at iteration $t$ is bounded by
\begin{equation}
\|x_0-x^{(t)}\|_2\leq \alpha^t \|x\|_2 + \beta \gamma\sqrt{m(e^{\epsilon}-1)},
\end{equation}
where
\begin{equation*}
\alpha=\sqrt{8}\delta_{3s-2k}~~\text{and}~~\beta=\sqrt{1+\delta_{2s-k}}\left ( \frac{1-\alpha^t}{1-\alpha}\right ).
\end{equation*}
\end{theorem}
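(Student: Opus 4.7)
\emph{Proof outline.} The strategy is to adapt the Blumensath--Davies fixed-point argument for IHT to accommodate (i) the partial-support thresholding operator $H_{s-k}^{T_0}$ and (ii) the data-dependent diagonal weighting matrix $W_t$. Introduce the residual $r^{(t)}=x-x^{(t)}$ and the pre-thresholding vector $a^{(t+1)}=x^{(t)}+\Phi^T W_t(y-\Phi x^{(t)})$, so that $x^{(t+1)}=H_{s-k}^{T_0}(a^{(t+1)})$. Substituting $y=\Phi x+z$ yields the key identity
\begin{equation*}
a^{(t+1)}-x=-(I-\Phi^T W_t\Phi)\,r^{(t)}+\Phi^T W_t z,
\end{equation*}
which splits the per-iteration error into a deterministic contraction term and a noise term.

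I would next carry out the support bookkeeping that explains the RIP order $3s-2k$. Because $H_{s-k}^{T_0}$ preserves the coordinates in $T_0$ and keeps only the $s-k$ largest among those outside $T_0$, one has $\text{supp}(x^{(t)})=T_0\cup\Gamma^{(t)}$ with $|\Gamma^{(t)}|\leq s-k$. Combined with $\text{supp}(x)=T_0\cup\Delta$ and $|\Delta|=s-k$, the joint support $B^{(t+1)}:=\text{supp}(x^{(t+1)})\cup\text{supp}(x)$ obeys $|B^{(t+1)}|\leq k+2(s-k)=2s-k$, and consequently $|B^{(t)}\cup B^{(t+1)}|\leq 3s-2k$, which is precisely the minimal RIP order appearing in the hypothesis.

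Then I would exploit the optimality of $H_{s-k}^{T_0}$: since $x^{(t+1)}|_{T_0^c}$ is the best $(s-k)$-term approximation to $a^{(t+1)}|_{T_0^c}$ and $x|_{T_0^c}$ is itself $(s-k)$-sparse by construction, $\|(a^{(t+1)}-x^{(t+1)})|_{T_0^c}\|_2\leq\|(a^{(t+1)}-x)|_{T_0^c}\|_2$, while $(a^{(t+1)}-x^{(t+1)})|_{T_0}=0$. Combining these with the triangle inequality, and sharpening via the squared-norm identity used by Blumensath and Davies (which is where the constant $\sqrt{8}$ is born), leads to the per-iteration estimate
\begin{equation*}
\|r^{(t+1)}\|_2\leq\sqrt{8}\,\|((I-\Phi^T W_t\Phi)r^{(t)})|_{B^{(t+1)}}\|_2+\sqrt{1+\delta_{2s-k}}\,\|W_t z\|_2.
\end{equation*}
The noise piece is controlled by $[W_t]_{ii}\leq 1\Rightarrow\|W_t z\|_2\leq\|z\|_2$, followed by Lemma~1 of \cite{Carr10}, which converts the Lorentzian hypothesis $\|z\|_{LL_2,\gamma}\leq\epsilon$ into $\|z\|_2\leq\gamma\sqrt{m(e^\epsilon-1)}$. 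The signal piece is controlled by applying the RIP of order $3s-2k$ on $B^{(t)}\cup B^{(t+1)}$, together with $W_t\preceq I$, to obtain $\|((I-\Phi^T W_t\Phi)r^{(t)})|_{B^{(t+1)}}\|_2\leq\delta_{3s-2k}\|r^{(t)}\|_2$, in direct parallel to the $W_t=I$ case of Theorem~\ref{LIHTthm1}.

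Unrolling the resulting recursion $\|r^{(t+1)}\|_2\leq\alpha\|r^{(t)}\|_2+\sqrt{1+\delta_{2s-k}}\,\gamma\sqrt{m(e^\epsilon-1)}$ with $\alpha=\sqrt{8}\delta_{3s-2k}<1$ (which is the role of the hypothesis $\delta_{3s-2k}<1/\sqrt{32}$) and the initialization $r^{(0)}=x$ produces the stated bound by geometric summation, delivering $\beta=\sqrt{1+\delta_{2s-k}}(1-\alpha^t)/(1-\alpha)$. The main obstacle I expect is the signal-term contraction: the weighting matrix $W_t$ is iteration- and data-dependent, so the standard RIP argument for the unweighted Gram matrix $I-\Phi^T\Phi$ must be shown to still deliver a $\delta_{3s-2k}$-level bound through $\Phi^T W_t\Phi$. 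The essential lever is that $W_t$ is a diagonal contraction with entries in $[0,1]$, so $W_t\preceq I$ lets the weighting be absorbed into noise-style estimates while leaving the sparse-signal geometry on $B^{(t)}\cup B^{(t+1)}$, which is what the RIP constants measure, controlled by $\delta_{3s-2k}$.
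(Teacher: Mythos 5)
Your outline follows the paper's proof essentially step for step: the same pre-thresholding vector $a^{(t)}$ and residual decomposition, the same support bookkeeping giving $|B^{(t+1)}|\leq 2s-k$ and $|B^{(t)}\cup B^{(t+1)}|\leq 3s-2k$, the same use of the optimality of $H_{s-k}^{T_0}$ to pick up the factor $2$, the same noise bound $\|W_t z\|_2\leq\|z\|_2\leq\gamma\sqrt{m(e^{\epsilon}-1)}$ via Lemma~1 of \cite{Carr10}, and the same geometric unrolling. The obstacle you flag at the end is resolved in the paper exactly as you propose, by invoking $[W_t]_{ii}\leq 1$ to dominate the eigenvalues of $\Phi^T W_t\Phi$ by those of $\Phi^T\Phi$; the only cosmetic difference is that the paper obtains the constant $\sqrt{8}=2\sqrt{2}$ by explicitly splitting $r^{(t)}$ over the disjoint sets $B^{(t+1)}$ and $B^{(t)}\setminus B^{(t+1)}$ and using $\|r^{(t)}_{B^{(t+1)}}\|_2+\|r^{(t)}_{B^{(t)}\setminus B^{(t+1)}}\|_2\leq\sqrt{2}\,\|r^{(t)}\|_2$, rather than bounding the restricted operator in one shot as you suggest.
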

\begin{proof}
Suppose $x\in\mathbb{R}^n$ and $T=\mbox{supp}(x)$, $|T|=s$ ($s$-sparse signal). If $T=T_0\cup \Delta$, then $|\Delta|=s-k$ where $|T_0|=k$. Define
\begin{equation}\label{gradient}
a^{(t)}=x^{(t)}+\Phi^T W_t(y-\Phi x^{(t)}).
\end{equation}
The update at each iteration $t+1$ can be expressed as:
\begin{equation}\label{update}
x^{(t+1)}=a^{(t)}_{T_0}+H_{s-k}(a^{(t)}_{T^c_0}),
\end{equation}
therefore $x^{(t+1)}_{T_0} = a^{(t)}_{T_0}$. The residual (reconstruction error) at iteration $t$ is defined as $r^{(t)}=x-x^{(t)}$.

Define $T^{(t)}=\text{supp}(x^{(t)})$ and $U^{(t)}=\text{supp}\left(H_{s-k}(a^{(t)}_{T^c_0})\right)$. It can be easily checked for all $t$ that $|\text{supp}(a^{(t)}_{T_0})|= k$, $|U^{(t)}| = s-k$ and $|T^{(t)}| = s$. Also define
\begin{equation*}
B^{(t+1)} = T \cup T^{(t+1)}= T_0 \cup \Delta \cup U^{(t+1)}.
\end{equation*}
Then, the cardinality of the set $B^{(t+1)}$ is upper bounded by
\begin{equation*}
|B^{(t+1)}|\leq |T_0|+ |\Delta|+ |U^{(t+1)}|=2s-k.
\end{equation*}
The error $r^{(t+1)}$ is supported on $B^{(t+1)}$. Using the triangle
inequality we have
\begin{equation*}
\|x_{B^{(t+1)}}-x^{(t+1)}_{B^{(t+1)}} \|_2 \leq \|x_{B^{(t+1)}}-a^{(t)}_{B^{(t+1)}}\|_2
+\|x^{(t+1)}_{B^{(t+1)}}-a^{(t)}_{B^{(t+1)}} \|_2.
\end{equation*}
We start by bounding $\|x^{(t+1)}_{B^{(t+1)}}-a^{(t)}_{B^{(t+1)}}
\|_2$. Remember that $x^{(t+1)}_{T_0} = a^{(t)}_{T_0}$ and that by definition of the thresholding operator, $x^{(t+1)}_{T^c_0}$ is
the best ($s-k$)-term approximation to $a^{(t)}_{T^c_0}$. Thus, $x^{(t+1)}$ is closer to $a^{(t)}$ than $x$, on $B^{(t+1)}$, and we have
\begin{equation*}
\|x^{(t+1)}_{B^{(t+1)}}-a^{(t)}_{B^{(t+1)}}\|_2  \leq \|x_{B^{(t+1)}}-a^{(t)}_{B^{(t+1)}}\|_2.
\end{equation*}
Therefore the error at iteration $t+1$ is bounded by
\begin{equation*}
\|x_{B^{(t+1)}}- x^{(t+1)}_{B^{(t+1)}}  \|_2 \leq 2
\|x_{B^{(t+1)}}-a^{(t)}_{B^{(t+1)}} \|_2.
\end{equation*}

Rewrite \eqref{gradient} as
\begin{equation*}
a^{(t)}= x^{(t)}+\Phi^T W_t\Phi x- \Phi^T W_t\Phi x^{(t)}+ \Phi^T W_t z.
\end{equation*}
Denote $\Phi_{\Omega}$ as the submatrix obtained by selecting the columns indicated by $\Omega$. Then
\begin{equation*}
a^{(t)}_{B^{(t+1)}}= x^{(t)}_{B^{(t+1)}}+\Phi^T_{B^{(t+1)}} W_t\Phi r^{(t)}+ \Phi^T_{B^{(t+1)}}W_t z
\end{equation*}
and we can bound the estimation error as
\begin{align*}
\|x_{B^{(t+1)}}- x^{(t+1)}_{B^{(t+1)}} \|_2 & \leq 2\|x_{B^{(t+1)}}-x^{(t)}_{B^{(t+1)}}-\Phi^T_{B^{(t+1)}}W_t\Phi r^{(t)}-\Phi^T_{B^{(t+1)}}W_t z \|_2 \\
& \leq 2 \|r^{(t)}_{B^{(t+1)}}-\Phi^T_{B^{(t+1)}}W_t\Phi r^{(t)}\|_2 + 2\|\Phi^T_{B^{(t+1)}}W_t z \|_2 \\
& \leq 2 \|(I-\Phi^T_{B^{(t+1)}}W_t\Phi_{B^{(t+1)}})r^{(t)}_{B^{(t+1)}}- \Phi^T_{B^{(t+1)}}W_t\Phi_{B^{(t)} \backslash B^{(t+1)}}r^{(t)}_{B^{(t)} \backslash B^{(t+1)}}  \|_2 \\
&+ 2\|\Phi^T_{B^{(t+1)}} W_t z \|_2 \\
& \leq 2 \|(I-\Phi^T_{B^{(t+1)}}W_t\Phi_{B^{(t+1)}})r^{(t)}_{B^{(t+1)}}\|_2\\
&+ 2 \|  \Phi^T_{B^{(t+1)}}W_t\Phi_{B^{(t)} \backslash B^{(t+1)}}r^{(t)}_{B^{(t)} \backslash B^{(t+1)}}  \|_2+ 2\|\Phi^T_{B^{(t+1)}}W_t z \|_2.
\end{align*}
Since $[W_{t}]_{i,i}\leq 1$ the eigenvalues of $\Phi^T W_{t}\Phi$ are bounded above by the eigenvalues of $\Phi^T\Phi$, and, therefore,
\begin{align*}
\|x_{B^{(t+1)}}- x^{(t+1)}_{B^{(t+1)}} \|_2 & \leq 2 \|(\Phi^T_{B^{(t+1)}}\Phi_{B^{(t+1)}}-I)\|_{2\rightarrow 2}\|r^{(t)}_{B^{(t+1)}}\|_2\\
&+ 2 \|  \Phi^T_{B^{(t+1)}}\Phi_{B^{(t)} \backslash B^{(t+1)}}\|_{2\rightarrow 2}\|r^{(t)}_{B^{(t)} \backslash B^{(t+1)}}  \|_2+ 2\|\Phi^T_{B^{(t+1)}}W_t z \|_2.
\end{align*}
Notice that
\begin{align*}
|B^{(t)} \cup B^{(t+1)}| & = |T_0 \cup \Delta \cup U^{(t+1)} \cup U^{(t)}|\\
& \leq |T_0| + |\Delta| + 2|U^{(t)}|= 3s-2k.
\end{align*}
Using basic properties of the restricted isometry constants (see Lemma 1 from \cite{Blu09}) and the fact that $\delta_{3s-2k}>\delta_{2s-k}$ we have the following. Define $\eta=2\sqrt{1+\delta_{2s-k}}$.
\begin{align*}
\|x_{B^{(t+1)}} - x^{(t+1)}_{B^{(t+1)}}  \|_2 &\leq 2\delta_{2s-k} \| r^{(t)}_{B^{(t+1)}} \|_2 + 2\delta_{3s-2k}\| r^{(t)}_{B^{(t)} \setminus B^{(t+1)}} \|_2  +  \eta\|W_t z\|_2\\
& \leq 2\delta_{3s-2k} \left(\| r^{(t)}_{B^{(t+1)}} \|_2 + \| r^{(t)}_{B^{(t)}
\setminus B^{(t+1)}} \|_2 \right) + \eta\| W_t z\|_2.
\end{align*}

Since $ B^{(t)} \backslash B^{(t+1)}$ and $B^{(t+1)}$ are disjoint sets we have $\| r^{(t)}_{B^{(t+1)}} \|_2 + \| r^{(t)}_{B^{(t)} \setminus B^{(t+1)}} \|_2 \leq \sqrt{2}\|r^{(t)}_{B^{(t)} \cup B^{(t+1)}} \|_2$. Thus, the estimation error at iteration $t+1$ is bounden by
\begin{equation*}
\| r^{(t+1)} \|_2  \leq \sqrt{8}\delta_{3s-2k}\| r^{(t)} \|_2 +  \eta\|W_t z\|_2.
\end{equation*}
This is a recursive error bound. Define $\alpha=\sqrt{8}\delta_{3s-2k}$ and assume $x^{(0)}=0$. Then
\begin{equation}\label{series}
\| r^{(t)} \|_2  \leq \alpha^t \|x\|_2 + \eta\|W_t z\|_2
\sum_{j=0}^t \alpha^j.
\end{equation}
We need $\alpha=\sqrt{8}\delta_{3s-2k} < 1$ for the series in \eqref{series} to
converge. For faster convergence and better stability we restrict $\sqrt{8}\delta_{3s-2k} < 1/2$, which yields the sufficient condition in Theorem~\ref{Ch6thm2}. Now we just need to bound $\|z\|_2$. Note that $[W_{t}]_{i,i}\leq 1$, which implies that
\begin{equation*}
\|W_t z \|_2\leq \|z\|_2\leq \gamma\sqrt{m(e^{\epsilon}-1)},
\end{equation*}
where the second inequality follows from Lemma 1 in \cite{Carr10}. 
\end{proof}

A sufficient condition for stable recovery of the LIHT algorithm is $\delta_{3s}<1/\sqrt{32}$ (see section~\ref{sec:CH6LIHT}), which is a stronger condition than that required by LIHT-PKS, since $\delta_{3s-2k}<\delta_{3s}$. Having a RIP of smaller order means that $\Phi$ requires fewer rows to meet the condition, \emph{i.e.}, fewer samples to achieve approximate reconstruction. Notice that when $k=0$ (cardinality of the partially known support), we have the same condition required by LIHT. The results in Theorem~\ref{Ch6thm2} can be easily extended to compressible signals using Lemma 6.1 in~\cite{Need08}, as was done in the previous section for LIHT.

\subsection{Extension of Lorentzian iterative hard thresholding to model-sparse signals}
\label{ssec:CH6MB}
Baraniuk \emph{et. al} introduced a model-based CS theory that reduces the degrees of freedom of a sparse or compressible signal~\cite{Duarte09a,Barak10a}. The key ingredient of this approach is to use a more realistic signal model that goes beyond simple sparsity by codifying the inter-dependency structure among the signal coefficients. This signal model might be be a wavelet tree, block sparsity or in general a union of $s$-dimensional subspaces~\cite{Barak10a}.

Suppose $\mathcal{M}_{s}$ is a signal model as defined in~\cite{Barak10a} and also suppose that $x_0 \in\mathcal{M}_{s}$ is an $s$-model sparse signal. Then, a model-based extension of the LIHT algorithm is motivated by solving the problem
\begin{equation}\label{Ch6PM}
\min_{x\in \mathcal{M}_{s}} \|y-\Phi x\|_{LL_2,\gamma},
\end{equation}
using the following recursion:
\begin{equation}
x^{(t+1)}=\mathbb{M}_{s}\left ( x^{(t)}+\mu^{(t)}\Phi^T W_t(y-\Phi x^{(t)})  \right ),
\end{equation}
where $\mathbb{M}_{s}(a)$ is the best $s$-term model-based operator that projects the vector $a$ onto $\mathcal{M}_{s}$. One remark to make is that, under the model-based CS framework of \cite{Barak10a}, this prior knowledge model can be leveraged in recovery with the resulting algorithm being similar to LIHT-PKS.

\section{Experimental Results}
\label{sec:CH6Res}
\subsection{Robust Reconstruction: LIHT}
Numerical experiments that illustrate the effectiveness of the LIHT algorithm are presented in this section. All experiments utilize synthetic $s$-sparse signals in a Hadamard basis, with $s=8$ and $n=1024$. The nonzero coefficients have equal amplitude, equiprobable sign, randomly chosen position, and average power fixed to 0.78. Gaussian sensing matrices are employed with $m=128$. One thousand  repetitions of each experiment are averaged and reconstruction SNR is used as the performance measure. Weighted median regression (WMR)~\cite{Paredes10a} and LS-IHT~\cite{Blu10} are used as benchmarks.

To test the robustness of the methods, we use two noise models: $\alpha$-stable distributed noise and Gaussian noise plus gross sparse errors. The Gaussian noise plus gross sparse errors model is referred to as  contaminated $p$-Gaussian noise for the remainder of the paper, as $p$ represents the amount of gross error contamination. To validate the estimate of $\gamma$ discussed in Section~\ref{ssec:CH6PT} we make a comparison between the performance of LIHT equipped with the optimal $\gamma$, denoted as LIHT-$\gamma_1$, and the signal-estimated $\gamma$, denoted as LHIT-$\gamma_2$. The optimal $\gamma$ is set as half the sample range of the clean measurements.

For the first experiment we consider a mixed noise environment, using contaminated $p$-Gaussian noise. We set the Gaussian component variance to $\sigma^2=10^{-2}$, resulting in an SNR of 18.9321 dB when $p=0$. The amplitude of the outliers is set as $\delta=10^{3}$ and $p$ is varied from $10^{-3}$ to $0.5$. The results are shown in Figure~\ref{Ch6fig:1} (a). The results demonstrate that LIHT outperforms WMR and IHT. Moreover, the results also demonstrate the validity of the estimated $\gamma$. Although the reconstruction quality achieved by LIHT-$\gamma_2$ is lower than that achieved LIHT-$\gamma_1$, the SNR of LIHT-$\gamma_2$ is greater than 20~dB for a broad range of contamination factors $p$, including contaminations up to 5\% of the measurements.

The second experiment explores the behavior of LIHT in very impulsive environments. We compare again against IHT and WMR, this time with $\alpha$-Stable sampling noise. The scale parameter of the noise is set as $\sigma=0.1$ for all cases and the tail parameter, $\alpha$, is varied from 0.2 to 2, \emph{i.e.}, very impulsive to the Gaussian case, Figure~\ref{Ch6fig:1} (b). For small values of $\alpha$, all methods perform poorly, with LIHT yielding the most acceptable results. Beyond $\alpha=0.6$, LIHT produces faithful reconstructions with a SNR greater than 20~dB, and often 10 dB greater than IHT and WMR results. Notice that when $\alpha=2$ (Gaussian case) the performance of LIHT is comparable with that of IHT, which is least squares based. Also of note is that the SNRs achieved by LIHT-$\gamma_1$ and LIHT-$\gamma_2$ are almost identical, with LIHT-$\gamma_1$ slightly better.

\begin{figure}[p]

\begin{minipage}[b]{\columnwidth}
  \centering
  \centerline{\epsfig{figure=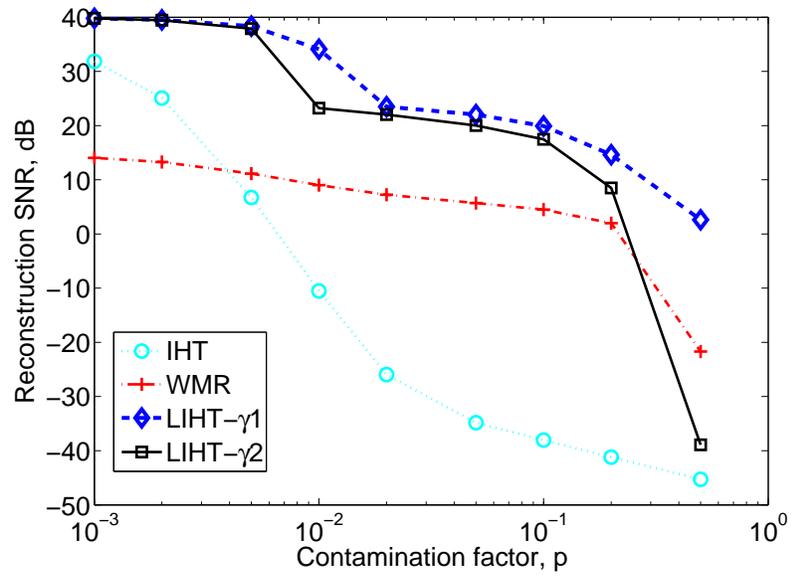,width=0.7\columnwidth}}
  \centerline{(a) }\medskip
\end{minipage}
\hfill
\begin{minipage}[b]{\columnwidth}
  \centering
  \centerline{\epsfig{figure=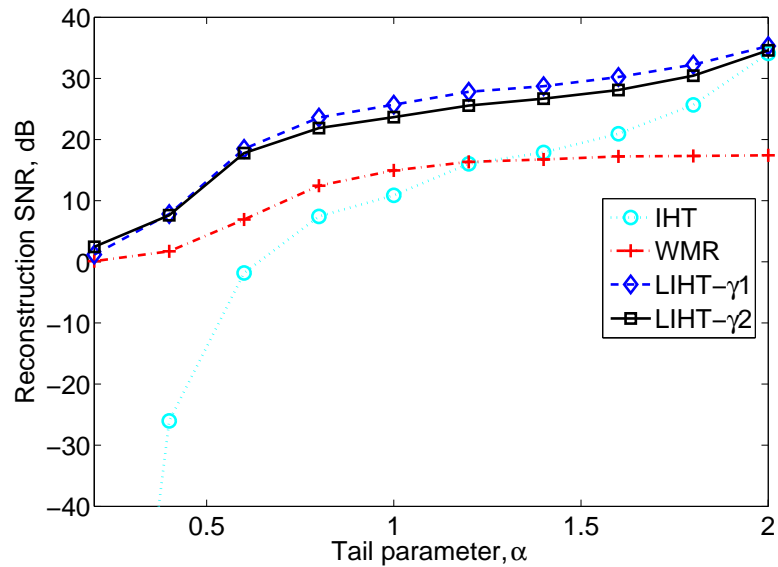,width=0.7\columnwidth}}
  \centerline{(b) }\medskip
\end{minipage}

\caption{Comparison of LIHT with LS-IHT and WMR for impulsive contaminated samples, $s=8$, $n=1024$ and $m=128$. (a) Contaminated p-Gaussian, $\sigma^2=0.01$. R-SNR as a function of the contamination parameter, p. (b) $\alpha$-stable noise, $\sigma=0.1$. R-SNR as a function of the tail parameter, $\alpha$. }
\label{Ch6fig:1}
\end{figure}

For the next experiment, we evaluate the performance of LIHT as the number of measurements varies for different levels of impulsiveness. The number of measurements is varied from 16 (twice the sparsity level) to 512 (half the dimension of $x_0$). The sampling noise model used is $\alpha$-stable with four values of $\alpha$: 0.5, 1,1.5, 2. The results are summarized in Figure~\ref{Ch6fig:2}, which show that, for $\alpha\in[1,2]$, LIHT yields fair reconstructions from 96 samples. However for $\alpha=0.5$ (most impulsive case of the four), more samples are needed, 256, to yield a fair reconstruction. Results of IHT with Gaussian noise ($\alpha=2$) are also included for comparison. Notice that the performance of LIHT is comparable to that of IHT for the Gaussian case. One remark is that LIHT needs more measurements, for a fixed sparsity level, than Lorentzian BP to yield an accurate reconstruction (see results in \cite{Carr10}). This is a general disadvantage of thresholding algorithms over $L_1$ minimization based methods~\cite{Blu09}.

\begin{figure}[t]
\centering{ 
\includegraphics[width = 0.7\columnwidth]{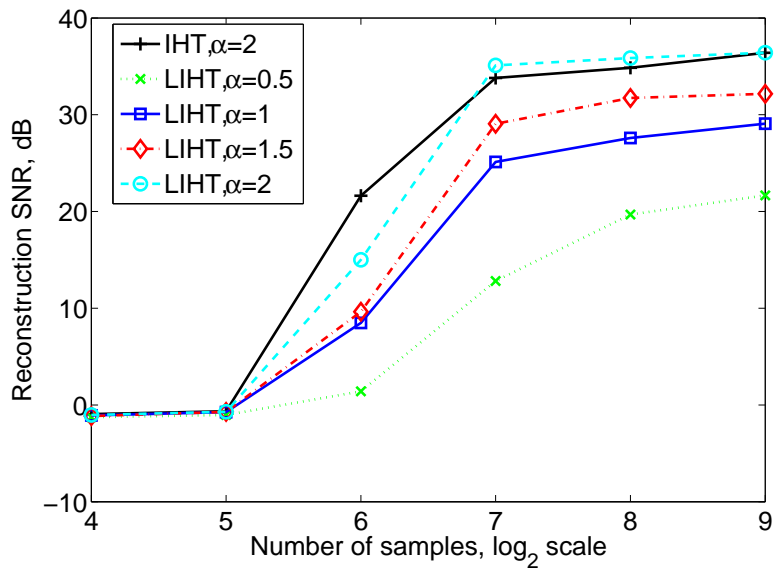}}
\caption{Reconstruction SNR as a function of the number of measurements, $s=8$ and $n=1024$.} \label{Ch6fig:2}
\end{figure}

The next experiment evaluates the computational speed of LIHT compared to the previously proposed Lorentizian BP. For this experiment we measure the reconstruction time required by the two algorithms for different signal lengths, $n=128, 256, 512, 1024, 2048$. We employ dense Gaussian sensing matrices (no fast matrix multiplication available) and fix $m=n/2$. Cauchy noise with $\sigma=0.1$ is added to the measurements. The sparsity level is fixed to $s=8$ for all signals lengths. The results are summarized in Table~\ref{Ch6tab1} with all times measured in seconds. All results are averaged over 200 realizations of the sensing matrix and the signals. The reconstruction times show that LIHT is at least three orders of magnitude faster than Lorentzian BP, with both algorithms being robust to impulsive noise. Thus, LIHT presents a fast alternative for sparse recovery in impulsive environments. One note is that the reconstruction times can be improved if structured sensing matrices that offer fast application of the sensing operator and its adjoint are used. Examples of these fast operators are the partial Fourier or Hadamard ensembles or binary sensing matrices. 
\begin{table}
\caption{Reconstruction times (in seconds) for LIHT and Lorentzian BP, $m=n/2$.} \label{Ch6tab1} \centering
\begin{tabular}{||c|c|c||}
  \hline \hline
  n & LBP & LIHT \\
  \hline \hline
  2048 & 758.0145 & 0.1755 \\
  \hline
   1024 & 116.5853 & 0.0730 \\
  \hline
   512 & 26.3145 & 0.0426 \\
  \hline
   256 & 8.7281 & 0.0102 \\
  \hline
   128 & 3.3747 & 0.0059 \\
  \hline \hline
\end{tabular}
\end{table}

The last experiment in this subsection shows the effectiveness of LIHT to recover real signals from corrupted measurements. We take random Hadamard measurements of the the $256\times 256$ ($n=65536$) Lena image and then add Cauchy distributed noise to the measurements. For all experiments we use the Daubechies Db8 wavelet transform as the sparsity basis and assume a sparsity level of $s=6000$. We fix the number of measurements as $m=32000$ and set the scale (dispersion) parameter of the  Cauchy noise to $\sigma=1$. Figure~\ref{Ch6Imex1} shows the clean measurements on the top image and the Cauchy corrupted measurements in the bottom one. 

\begin{figure}[t]
\centering{ 
\includegraphics[width = 0.7\columnwidth]{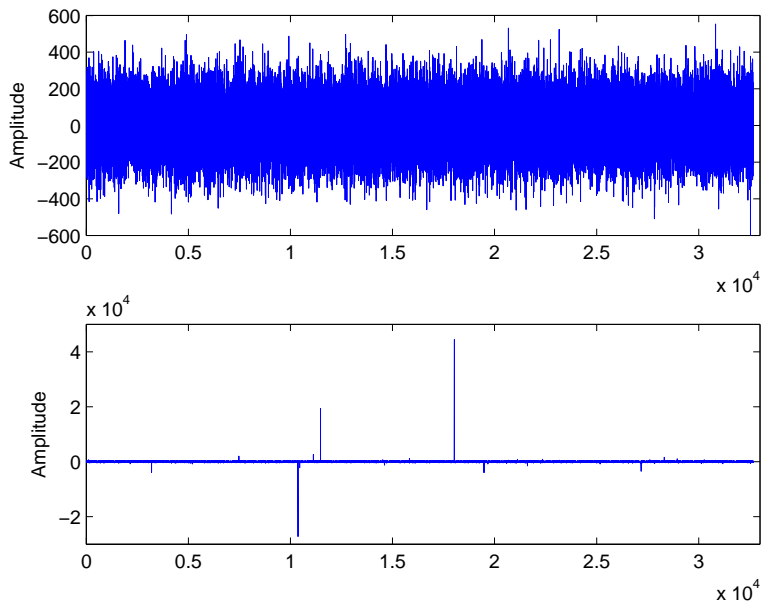}}
\caption{Example of a $256\times256$ image sampled by a random Hadamard ensemble, $m=32000$. Top: clean measurements. Bottom: Cauchy corrupted measurements, $\sigma=1$.} \label{Ch6Imex1}
\end{figure}

We compare the reconstruction results of LIHT to those obtained by the classical LS-IHT algorithm, the LS-IHT with noise clipping and LS-IHT with the measurement rejection method proposed in \cite{Laska11}. To set a clipping rule we assume that we know before hand the the range of the clean measurements and all samples are clipped within this range, \emph{i.e.}
\begin{equation*}
y_{i}^c  = \left\{ \begin{array}{rcl}
-\lambda, & \mbox{ }& \text{if}~y_i\leq -\lambda  \\ 
y_i, & \mbox{ }& \text{if}~|y_i| < \lambda \\ 
\lambda, & \mbox{ }& \text{if}~y_i\geq \lambda, 
\end{array}\right.
\end{equation*}
where $y^c$ denotes the vector of clipped measurements. For the measurement rejection approach we adapt the framework in \cite{Laska11} to address impulsive noise rather than saturation noise. We discard large measurements and form a new measurement vector as $y^r=y_{S_r}$, where $S_r=\{i | |y_i| < \lambda\}$. To find the optimal $\lambda$ for both approaches we perform an exhaustive search. Table \ref{Ch6tab2} presents the reconstruction results for different values of $\lambda$ in terms of $B$, where $B=\max_i|y_{0i}|$ and $y_0$ denotes the clean measurement vector. Thus, we select $\lambda=B$ for the clipping approach and $\lambda=0.5B$ for the measurement rejection approach. We also compare LIHT to the recovery of sparsely corrupted signals (RSCS) framework proposed in \cite{Studer12}. In this framework a sparse signal and error model is assumed and both signal and error are estimated at the same stage using an $L_1$ minimization problem with an augmented measurement matrix. In our experiments, we assume no signal/error support knowledge for RSCS. For LIHT we estimate $\gamma$ using equation \eqref{Ch6gamma}. 

\begin{table}
\caption{R-SNR (in db) for LS-IHT with clipping and rejection for different values of $\lambda$. $B=\max_i|y_{0i}|$.} \label{Ch6tab2} \centering
\begin{tabular}{||c|c|c|c|c|c|c||}
  \hline \hline
  $\lambda$ & 0.5$B$ & $B$&2$B$ &3$B$&4$B$&5$B$\\
  \hline \hline
  Clipping & -0.2 & 13.0&11.4&10.2&9.3&6.0 \\
  \hline
   Rejection & 16.2 & 15.4&14.4&13.4&12.8&10.9 \\
  \hline \hline
\end{tabular}
\end{table}

Figure~\ref{Ch6Imex2}~(a) shows the reconstructed image using  LS-IHT, R-SNR=-5.3~dB. Figure~\ref{Ch6Imex2}~(b) and \ref{Ch6Imex2}~(c) show the reconstructed images using  LS-IHT with noise clipping, R-SNR=13.0~dB, and measurement rejection, R-SNR=16.2~dB, respectively. Figure~\ref{Ch6Imex2}~(d) shows the reconstructed image by  RSCS, R-SNR=17.16~dB and Figure~\ref{Ch6Imex2}~(e) shows the reconstructed image using  LIHT, R-SNR=19.8~dB. Figure~\ref{Ch6Imex2}~(f) shows the reconstructed image from noiseless measurements using  LS-IHT as comparison, R-SNR=22.8~dB. From the results it is clear that LIHT outperform the other approaches with a reconstruction quality about 3~dB worse than the noiseless reconstruction. We also evaluate the reconstruction quality of LIHT and the benchmark methods as the number of measurements is varied. Table \ref{Ch6tab3} presents the results for four different number of measurements, $m=\{2s,3s,4s,5s\}$, where $s=6000$ is the sparsity level. The results show the advantage of robust operators in impulsive environments, especially when the number of measurements is limited. 
\begin{table}
\caption{Lena reconstruction results from Cauchy corrupted measurements.
 R-SNR (in db) as a function of $m$. $s=6000$.} \label{Ch6tab3} \centering
\begin{tabular}{||c|c|c|c|c||}
  \hline \hline
  $m$ & 2$s$ & 3$s$&4$s$ &5$s$\\
  \hline \hline
  LS-IHT & -8.5 & -5.7&-5.5&-3.4 \\
  \hline
  Clipping & 3.9 & 8.9&9.9&11.5 \\
  \hline
  Rejection & 4.7 & 10.3&11.6&14.0\\
  \hline
  RSCS  & 4.8 & 10.9&11.9&16.8 \\ 
  \hline
  LIHT & 6.9 & 12.3&13.9&17.9\\
  \hline \hline
\end{tabular}
\end{table}

\begin{figure}[p]
\centering
    
    \subfigure[]{\includegraphics[trim = 3cm 1cm 1.8cm 0.5cm, clip, keepaspectratio, width = 5.5cm]   {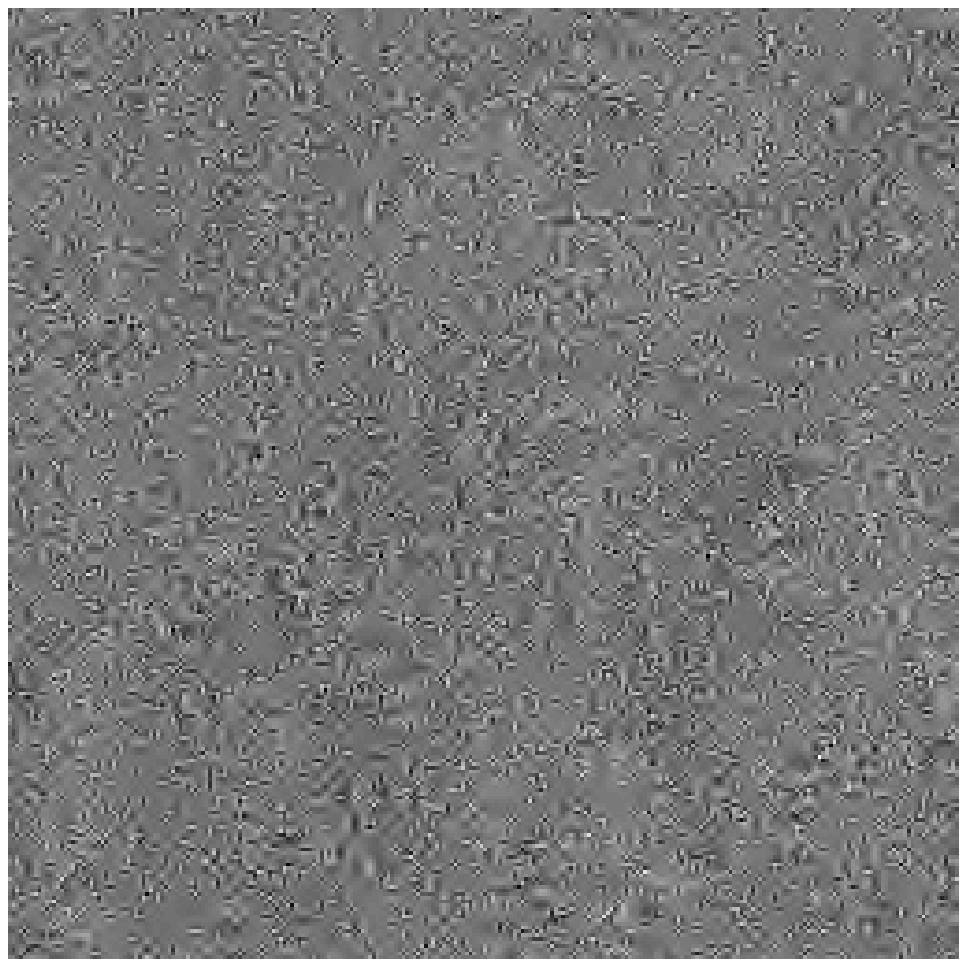}}
    \subfigure[]{\includegraphics[trim = 3cm 1cm 1.8cm 0.5cm, clip, keepaspectratio, width = 5.5cm]   {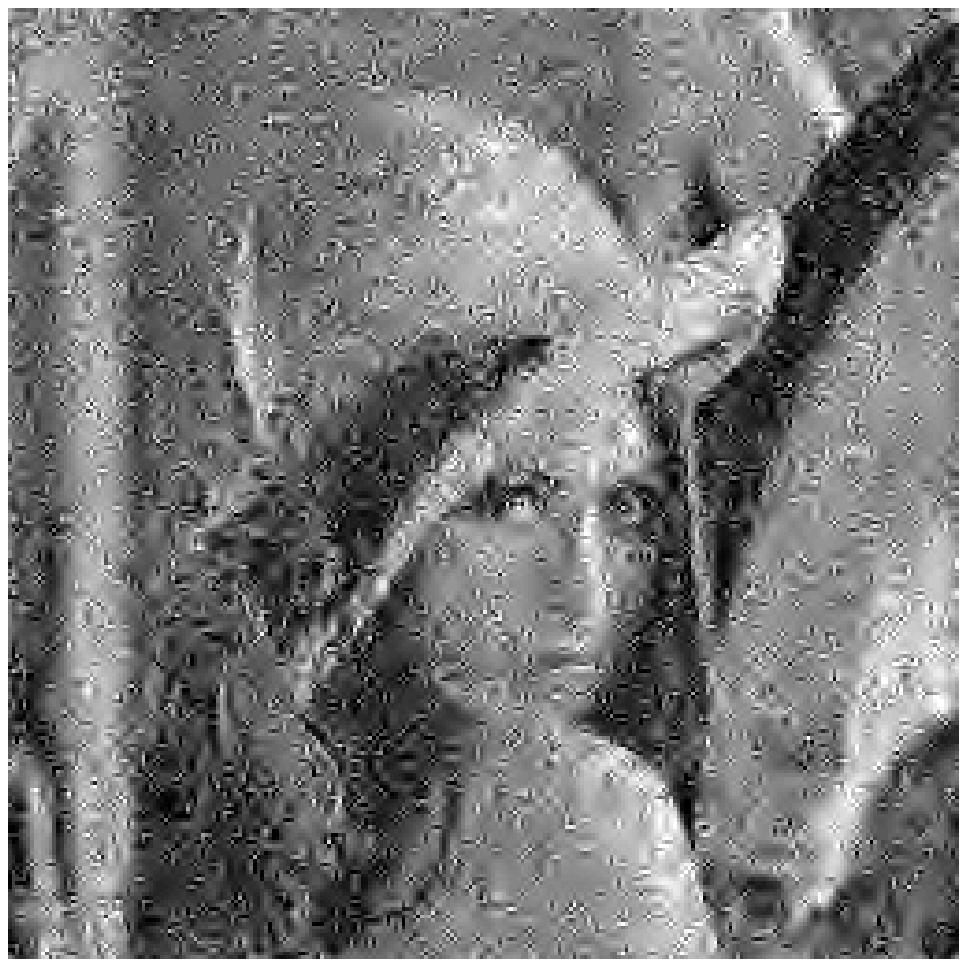}}
    \subfigure[]{\includegraphics[trim = 3cm 1cm 1.8cm 0.5cm, clip, keepaspectratio, width = 5.5cm]   {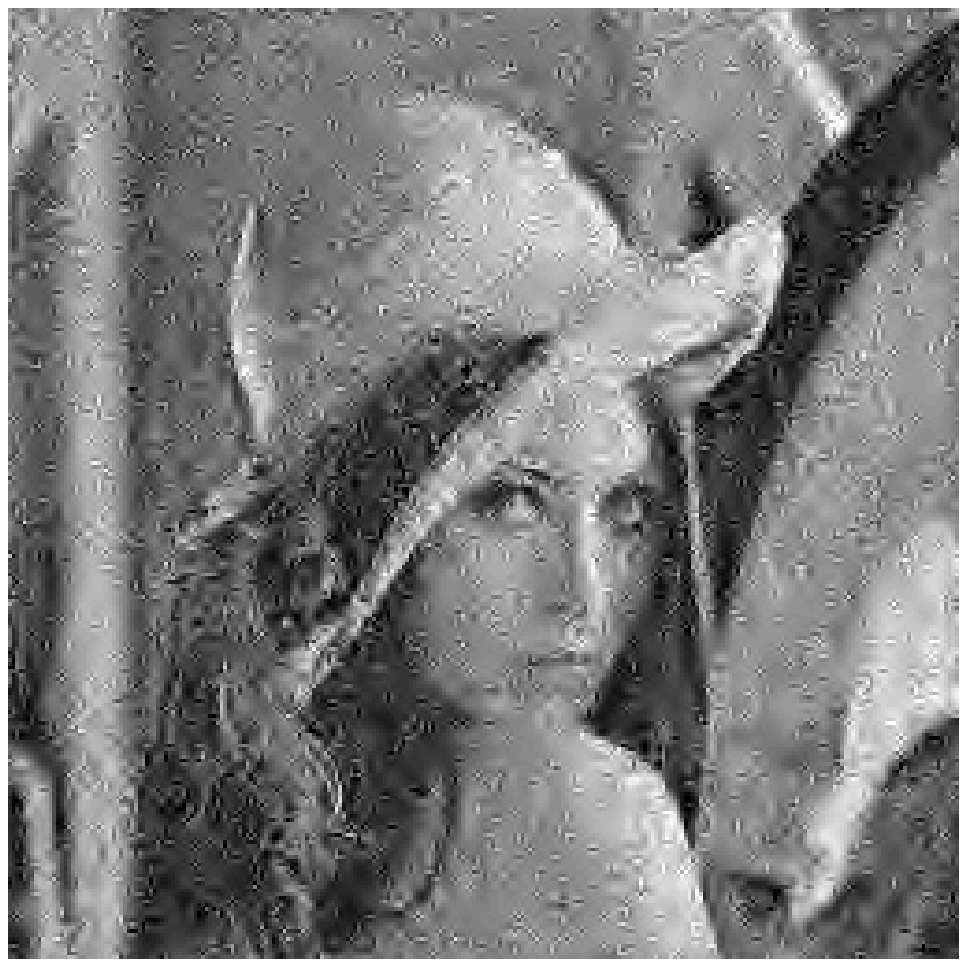}}
    \subfigure[]{\includegraphics[trim = 3cm 1cm 1.8cm 0.5cm, clip, keepaspectratio, width = 5.5cm]   {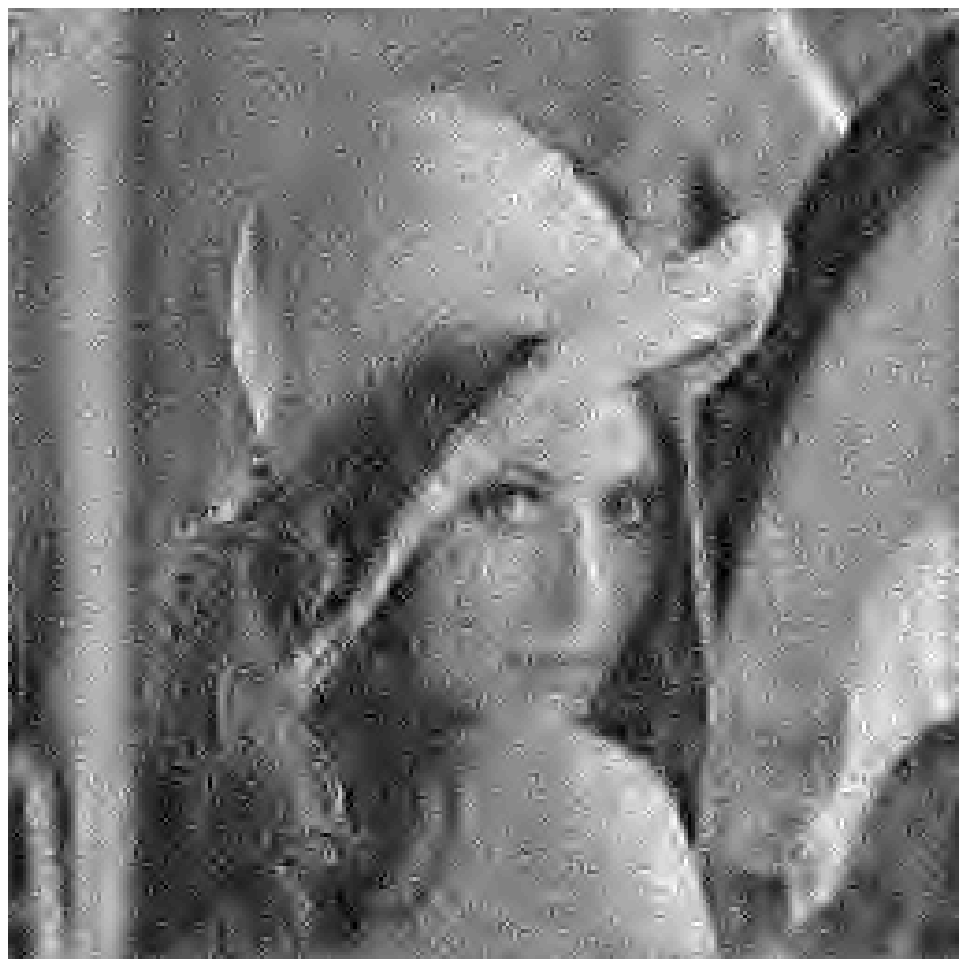}}
    \subfigure[]{\includegraphics[trim = 3cm 1cm 1.8cm 0.5cm, clip, keepaspectratio, width = 5.5cm]   {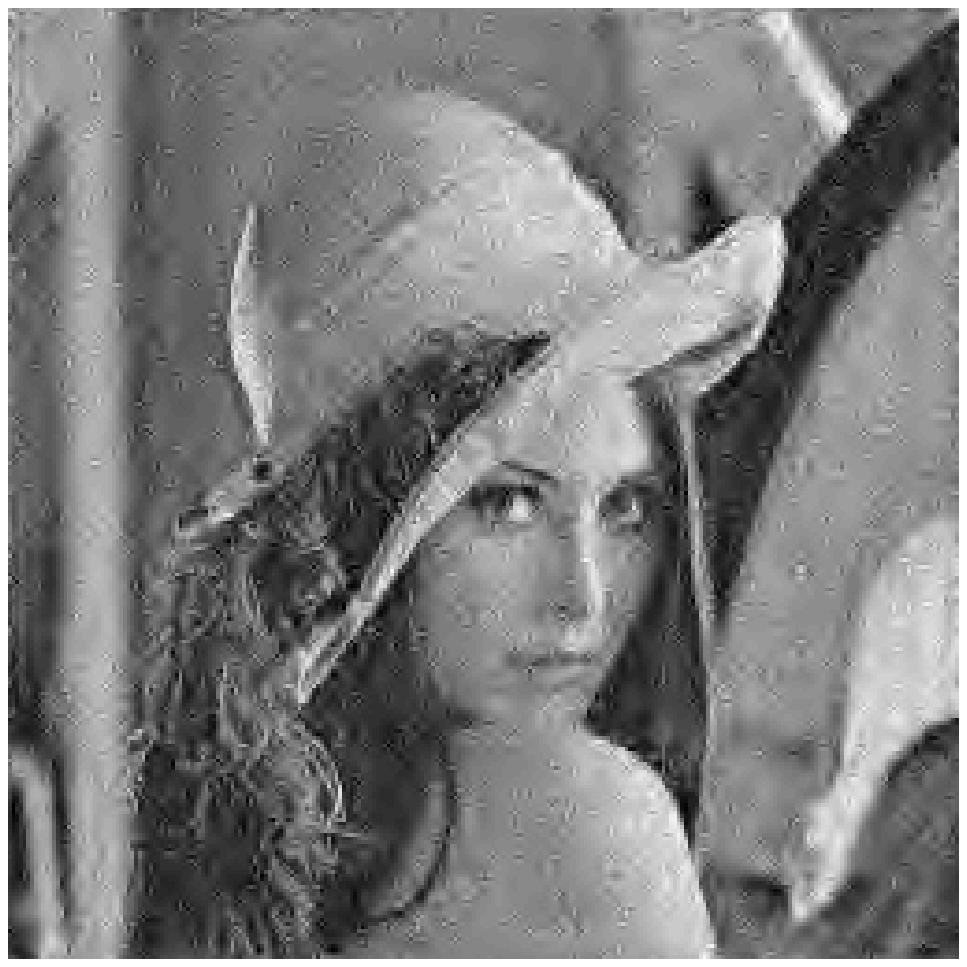}}
    \subfigure[]{\includegraphics[trim = 3cm 1cm 1.8cm 0.5cm, clip, keepaspectratio, width = 5.5cm]   {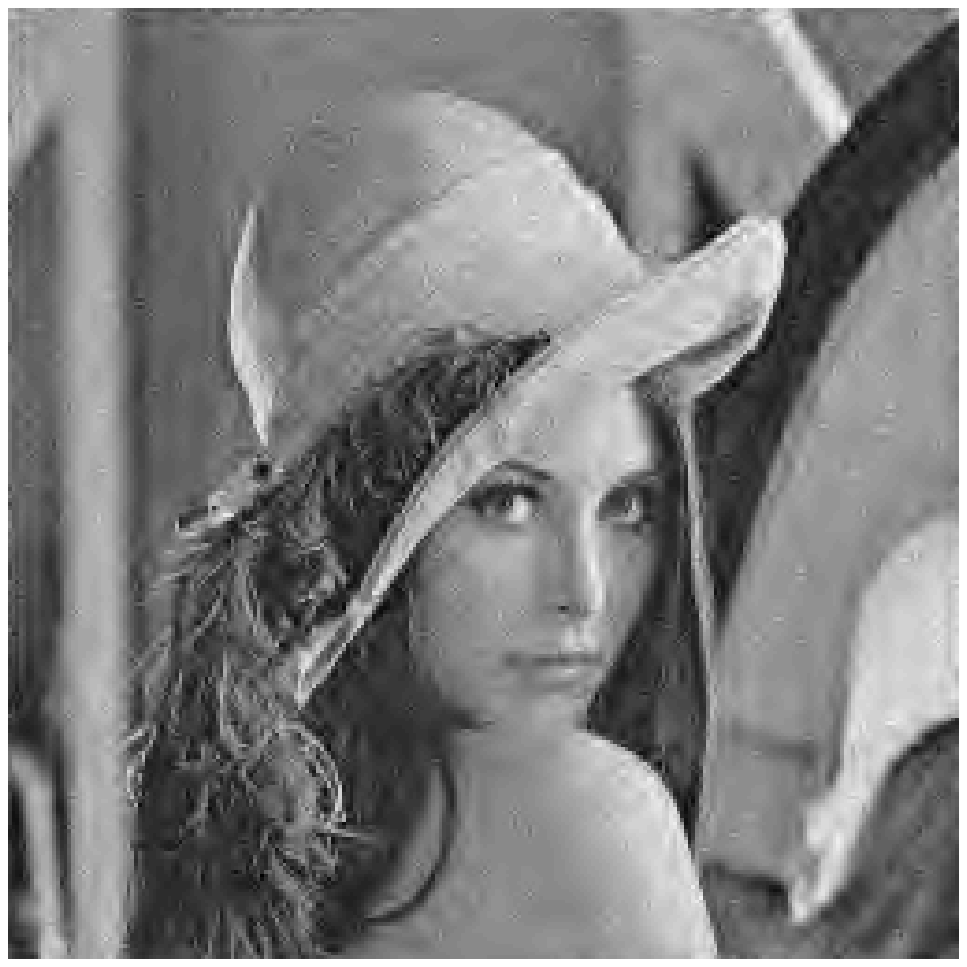}}
    
\caption{Lena image reconstruction example from measurements corrupted by Cauchy noise, $m=32000$ and $s=6000$.  (a) Reconstructed image using  LS-IHT, R-SNR=-5.3~dB. (b) Reconstructed image using  LS-IHT and noise clipping, R-SNR=13.0~dB. (c) Reconstructed image using  LS-IHT and measurement rejection, R-SNR=16.2~dB. (d) Reconstructed image using  RSCS, R-SNR=17.2~dB. (e) Reconstructed image using  LIHT, R-SNR=19.8~dB. (f) Reconstructed image from noiseless measurements using  LS-IHT, R-SNR=22.8~dB.}
\label{Ch6Imex2}%
\end{figure}

\subsection{LIHT with Partially Known Support}
Numerical experiments that illustrate the effectiveness of LIHT with partially known support are presented in this section. Results are presented for synthetic and real signals. In the real signal case, comparisons are made with a broad set of alternative algorithms.

Synthetic sparse vectors are employed in the first experiment. The signal length is set as $n=1000$ and the sparsity level is fixed to $50$. The nonzero coefficients are drawn from a Rademacher distribution, their position randomly chosen and amplitudes $\{-10,10\}$. The vectors are sampled using sensing matrices $\Phi$ that have i.i.d. entries drawn from a standard normal distribution with normalized columns. Each experiment is repeated 300 times, with average results presented.

The effect of including partial support knowledge is analyzed by increasing the cardinality of the known set in steps of $10\%$  for different numbers of measurements. The probability of exact reconstruction is employed as a measure of performance. Figure~\ref{Ch6fig:PKS} shows that, as expected, the reconstruction accuracy grows with the percentage of known support. The results also show that incorporating prior support information substantially reduces the number of measurements required for successful recovery.
\begin{figure}[t]
\centering{ 
\includegraphics[width = 0.7\columnwidth]{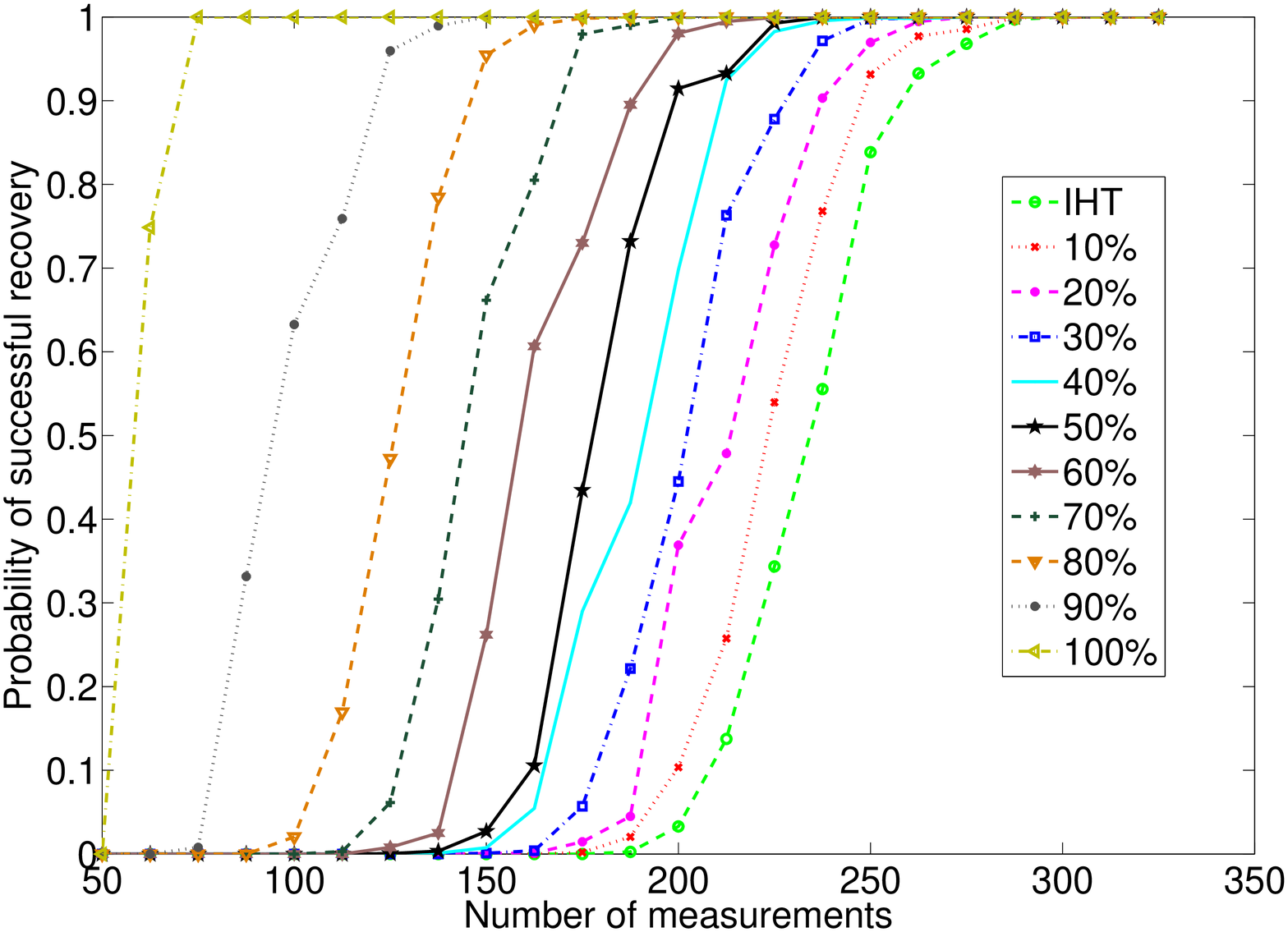}}
\caption{Probability of successful recovery as a function of the number of measurements, for different percentages of partially known support and signal length $n=1000$.} \label{Ch6fig:PKS}
\end{figure}

The second experiment illustrates algorithm performance for real compressible signals. ECG signals are utilized due to the structure of their sparse decompositions. Experiments are carried out over 10-min long leads extracted from records 100, 101, 102, 103, 107, 109,
111, 115, 117, 118 and 119 from the MIT-BIH Arrhythmia Database (see~\cite{Blanco08} and references therein). Cosine modulated filter banks are used to determine a sparse representation of the signal~\cite{Blanco08}. A sparse signal approximation is determined by processing 1024 samples of ECG data, setting the number of channels, $M$, to 16, and selecting the largest 128 coefficients. This support set is denoted by $T$; note that $|T|=128$. Figure~\ref{Ch6fig:ECG2} shows an example of a decomposition of a lead of 1024 samples and its decomposition using CMFB.

\begin{figure}[t]
\centering{ 
\includegraphics[width = 0.7\columnwidth]{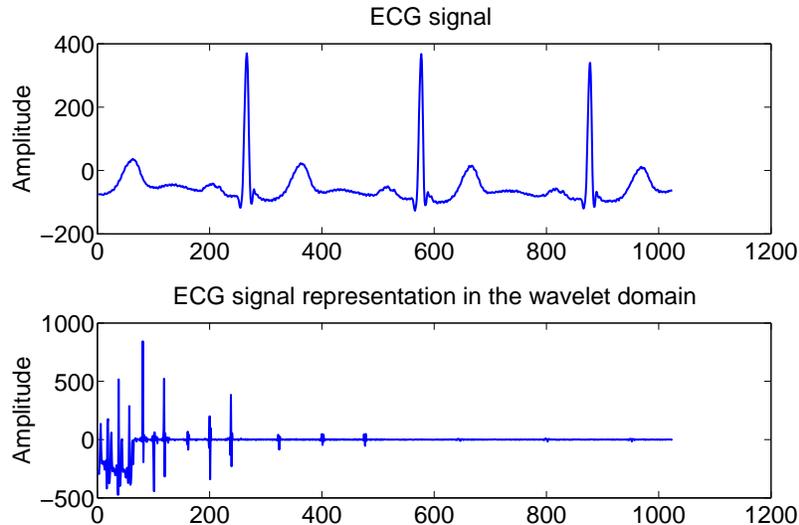}}
\caption{Decomposition of an ECG signal using CMFB, $M=16$ and $n=1024$.} \label{Ch6fig:ECG2}
\end{figure}

Three cases are considered. In the first, the median (magnitude) support coefficient is determined and the coefficients of $T$ with magnitudes greater than or equal to the median are designated as the known signal support, \emph{i.e.}, the positions of the largest (magnitude) 50\% of $T$ coefficients are taken to be the known signal support. This case is denoted as IHT-PKS-I. The second partially known support case corresponds to those with magnitude less than the median, \emph{i.e.}, the positions of the smallest (magnitude) 50\% of $T$ coefficients since these  might be the most difficult to find coefficients. This case is denoted as IHT-PKS-II. The third and final selection, denoted as IHT-PKS, is related to the low-pass approximation of the first subband, which corresponds to the first 64 coefficients (when $n=1024$). This first subband accumulates the majority of signal energy, which is the motivation for this case.

Figure~\ref{Ch6fig:ECG} compares the three proposed partially known support selections.  Each method improves the performance over standard LIHT, except for IHT-PKS-II when the number of measurements is not sufficient to achieve accurate reconstruction. Note, however, that the performance of IHT-PKS-II improves rapidly as the number of measurements increases, with the method outperforming the other algorithms in this regime. The performance of  IHT-PKS-I is very similar to IHT-PKS since most of the first subband low-pass approximation coefficients are included in the $50\%$ largest coefficients of $T$ set. Notice that IHT-PKS-I performs slightly better than IHT-PKS for small numbers of measurements.

\begin{figure}[t]
\centering{ 
\includegraphics[width = 0.7\columnwidth]{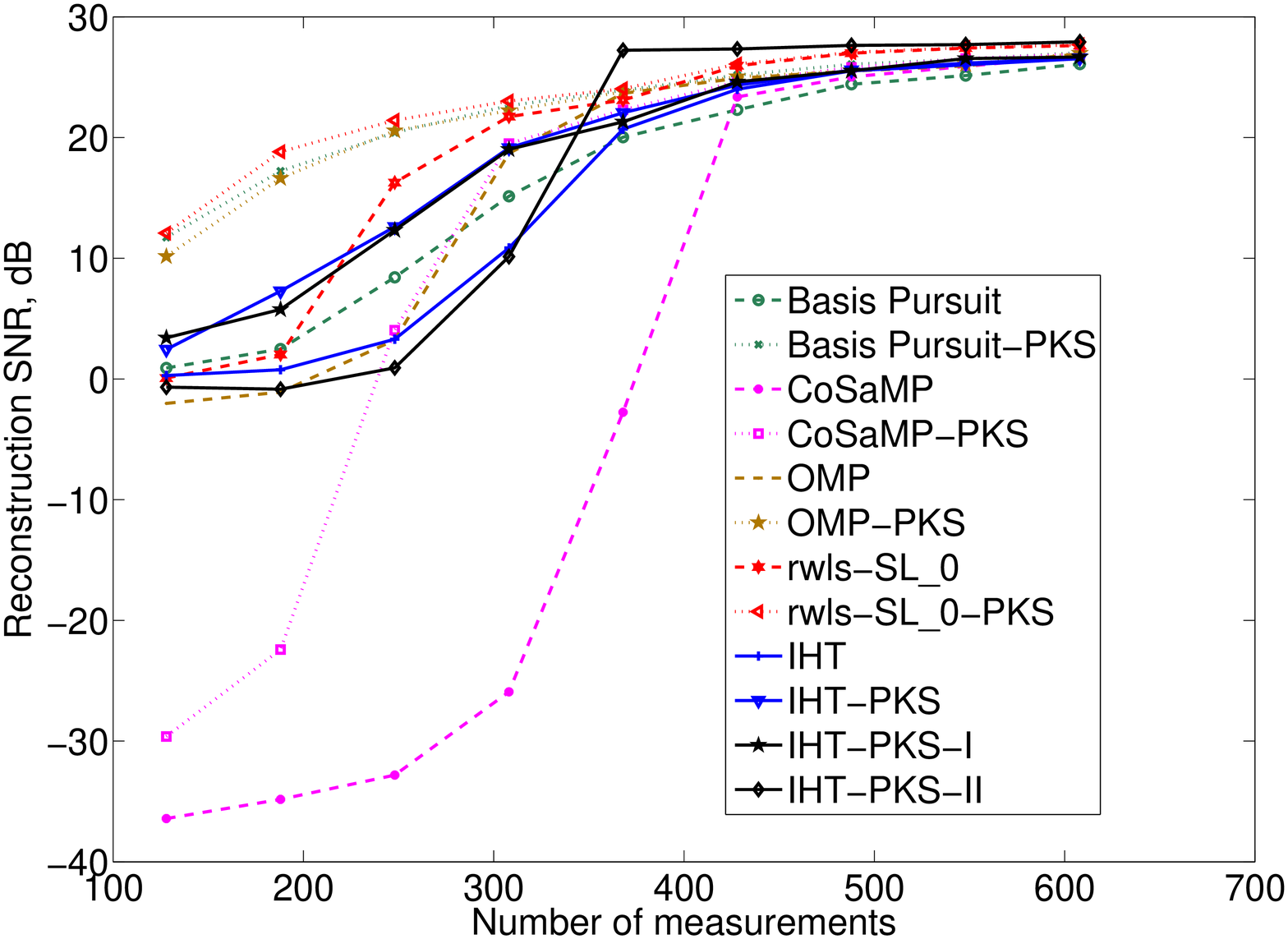}}
\caption{Comparison of LIHT, BP, OMP, CoSaMP, rwls-S$L_0$ and their partially known support versions for ECG signals of length $n=1024$.} \label{Ch6fig:ECG}
\end{figure}

Also compared with LIHT in Figure~\ref{Ch6fig:ECG} are the OMP, CoSaMP, and rwls-S$L_0$ iterative algorithms,
as well as their partially known support versions (OMP-PKS, CoSaMP-PKS, and rwls-S$L_0$-PKS)~\cite{Carr10c}. For reference, we also include Basis Pursuit (BP) and Basis Pursuit with partially known support (BP-PKS)~\cite{Vaswani09}. In all cases, the positions of the first subband low-pass approximation coefficients are selected as the signal partially known support. Note that LIHT-PKS performs better than CoSaMP-PKS for small numbers of measurements and yields similar reconstructions when the number of measurements increases. Although the known support versions of the other iterative algorithms require fewer measurements to achieve accurate reconstructions, LIHT does not require the exact solution to an inverse problem, thus making it computationally more efficient. And as in the previous example, the performance of Lorentzian iterative hard thresholding is improved through the inclusion of partially known support information, thereby enabling the number of measurements requires for a specified level of performance to be reduced.

\begin{figure}[t]
\centering{ 
\includegraphics[width = 0.6\columnwidth]{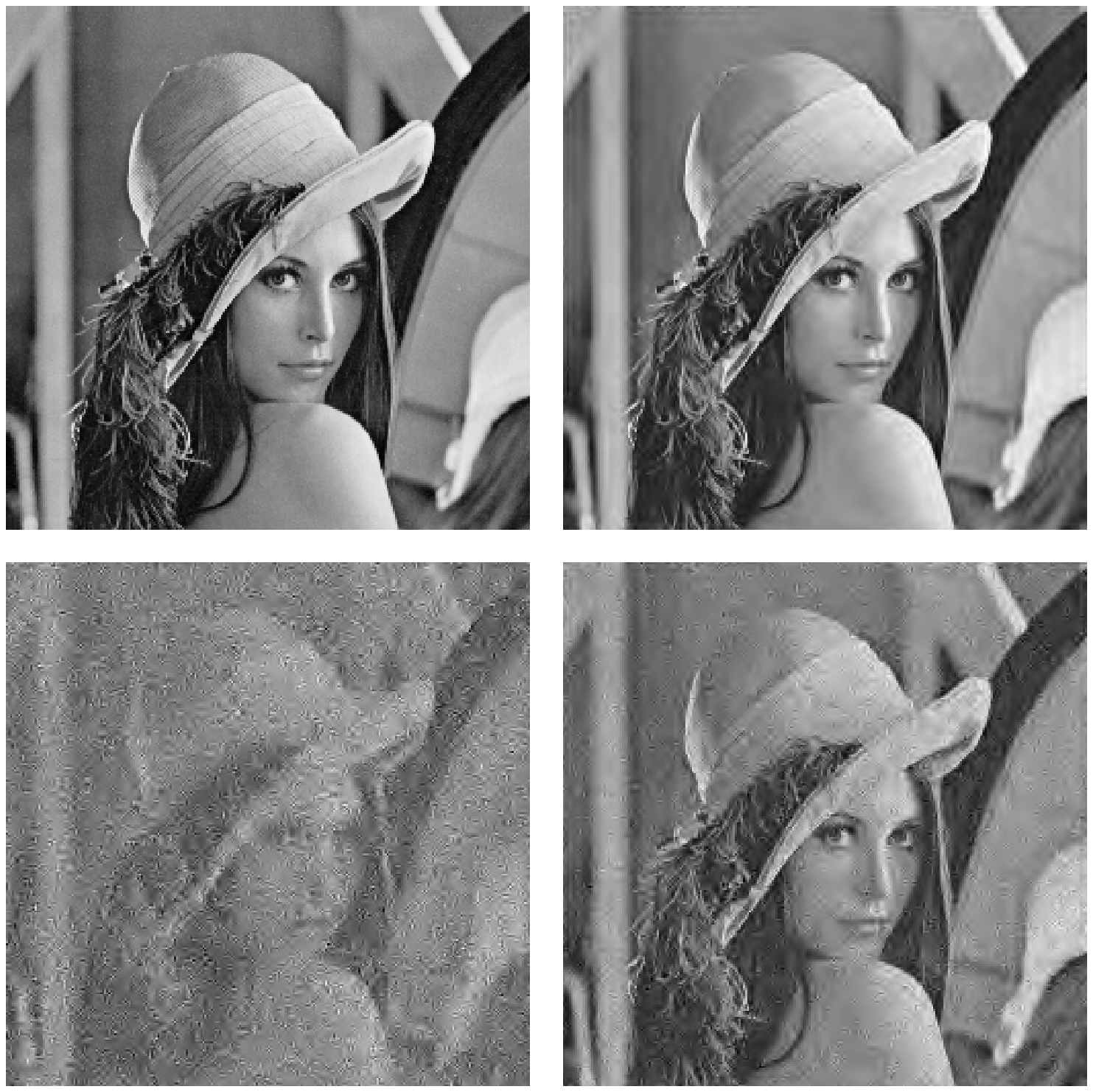}}
\caption{Top left: Original 256$\times$256 image. Top right: Best $s$-term approximation, $s=6000$, R-SNR=23.9~dB. Reconstruction from $m=16000$ measurements. Bottom left: LIHT, R-SNR=10.2~dB. Bottom right: LIHT-PKS $k=2048$, R-SNR=20.4~dB.} \label{Ch6fig:Im2}
\end{figure}
As a final example we illustrate how the partially known support framework can be applied in image reconstruction. Consider a wavelet decomposition of natural images. It is observed that the largest coefficients are concentrated in the approximation band and the remainder signal, detail coefficients, is a sparser signal than the original decomposition. Thus, a possible form to incorporate the partially known support framework is to assume that the approximation band coefficients are part of the true signal support, \emph{i.e.}, the partially known support. To test our assumption we take random Hadamard measurements of the the $256\times 256$ Lena image and then we estimate the image from the measurements. Figure~\ref{Ch6fig:Im2} top left shows the original image. We use the Daubechies DB8 wavelet transform as our sparsity basis and we approximate the image with the largest $6000$ coefficients, thus $|T|=6000$. Figure~\ref{Ch6fig:Im2} top right shows the best $s$-term approximation, $s=6000$, with R-SNR=23.9~dB for comparison. We take $m=16000$ measurements and reconstruct the image using the LIHT algorithm and the LIHT-PKS algorithm. For LIHT-PKS we assume that the approximation band is in the true support of the image coefficients, $k=2048$ for this example. The reconstruction results are shown in Figure~\ref{Ch6fig:Im2} bottom left and Figure~\ref{Ch6fig:Im2} bottom right, respectively. The reconstruction SNRs are R-SNR=10.2~dB for the standard LIHT and R-SNR=20.4~dB for LIHT-PKS. The LIHT-PKS algorithm outperforms its counterpart without support knowledege by 10~dB, but more importantly, the partially known support reconstruction quality is 3~dB below the reconstruction quality obtained by the best $s$-term approximation.

\section{Concluding Remarks}
\label{sec:Ch6conc}
This paper presents a Lorentzian based IHT algorithm for recovery of sparse signals in impulsive environments. The derived algorithm is comparable to least squares based IHT in terms of computational load, with the advantage of robustness against heavy-tailed impulsive noise. Sufficient conditions for stability are studied and a reconstruction error bound is derived that depends on the noise strength and a tunable parameter of the Lorentzian norm. Simulations results show that the LIHT algorithm yields comparable performance with state of the art algorithms in light-tailed environments while having substantial performance improvements in heavy-tailed environments. Simulation results also show that LIHT is a fast reconstruction algorithm with scalability for large dimensional problems. Methods to estimate the adjustable parameters in the reconstruction algorithm are proposed, although computation of their optimal values remains an open question. Future work will focus on convergence analysis of the proposed algorithm. 

Additionally, this paper proposes a modification of the LIHT algorithm that incorporates known support in the recovery process. Sufficient conditions for stable recovery in the compressed sensing with partially known support problem are derived. The theoretical analysis shows that including prior support information relaxes the conditions for successful reconstruction. Numerical results show that the modified LIHT improves performance, thereby requiring fewer samples to yield an approximate reconstruction.

\bibliographystyle{IEEEtran}
\bibliography{abrev,RGCD,carrillo-publications,RCS,RGCD1,robustlmestimation}

\end{document}